\documentclass{article}

 \usepackage[main, final]{neurips_2026}
\usepackage[utf8]{inputenc} % allow utf-8 input
\usepackage[T1]{fontenc}    % use 8-bit T1 fonts
\usepackage{hyperref}       % hyperlinks
\usepackage{url}            % simple URL typesetting
\usepackage{booktabs}       % professional-quality tables
\usepackage{amsfonts}       % blackboard math symbols
\usepackage{nicefrac}       % compact symbols for 1/2, etc.
\usepackage{microtype}      % microtypography
\usepackage{xcolor}         % colors

\usepackage{amsmath}
\usepackage{amssymb}
\usepackage{mathtools}
\usepackage{amsthm}

\usepackage{bm}

\usepackage{wrapfig}
\usepackage{color}
\usepackage{colortbl}
\usepackage{booktabs}
\usepackage[dvipsnames,table]{xcolor}

\usepackage{makecell}

\usepackage{array}

\newcommand{\thickhline}{\specialrule{\heavyrulewidth}{0pt}{0pt}}

\definecolor{colPaper}{RGB}{234, 240, 254}   % Light Periwinkle
\definecolor{colModel}{RGB}{252, 235, 238}   % Light Pink
\definecolor{colComp}{RGB}{241, 248, 234}    % Light Lime/Green
\definecolor{colParam}{RGB}{254, 248, 228}   % Light Cream/Yellow
\definecolor{colBound}{RGB}{225, 235, 236}   % Light Gray-Blue

\definecolor{symRed}{HTML}{C62828}      % Deep Crimson
\definecolor{symGreen}{HTML}{2E7D32}    % Forest Green
\definecolor{symWarn}{HTML}{F9A825}     % Muted Amber

\theoremstyle{plain}
\newtheorem{theorem}{Theorem}[section]
\newtheorem{proposition}[theorem]{Proposition}

\theoremstyle{definition}

\newtheorem{assumption}[theorem]{Assumption}
\theoremstyle{remark}

\usepackage{algorithm}
\usepackage{algpseudocode}

\usepackage{subcaption}

\usepackage[most]{tcolorbox}
\definecolor{apricotBG}{HTML}{FFE9DC}
\definecolor{terracottaBorder}{HTML}{A0522D}

\newtcolorbox{myremark}[1]{
  enhanced,
  colback=apricotBG,
  colframe=terracottaBorder,
  coltitle=white,
  fonttitle=\bfseries,
  title={Remark: #1},
  attach boxed title to top left={yshift=-2mm, xshift=3mm},
  boxed title style={colback=terracottaBorder},
  arc=3pt,
  boxrule=1pt,
  drop fuzzy shadow
}

\title{Generalized Neural Operator for Parametric and Boundary-Value Problems}

\author{%
  Ruoyan Li\\
  University of California, Los Angeles \\
  \And
  Yizhou Sun \\
  University of California, Los Angeles \\
  \AND
  Wei Wang \\
  University of California, Los Angeles \\
}

\begin{document}

\maketitle

\begin{abstract}
  Physical systems are fully characterized by Partial Differential Equations (PDEs) alongside their specific parameters and boundary conditions.
  Thus, developing foundational physical simulators requires robust generalization across diverse PDE parameters and boundary conditions.
  However, existing solvers face a fundamental trilemma, struggling to simultaneously achieve computational efficiency, mathematical rigor, and physics-agnostic deployment.
  Traditional numerical methods provide mathematical rigor but are computationally expensive.
  Physics-Informed Neural Networks (PINNs) maintain this rigor and are computationally efficient, yet they remain instance-specific and require retraining for new conditions.
  Purely data-driven Neural Operators offer efficiency but sacrifice explicit physical grounding, leading to ill-posed formulations, dataset biases, and generalization failures.
  Furthermore, the massive scale of emerging foundational operators has severely degraded their inference speeds, making them computationally uncompetitive with traditional numerical solvers.
  To address these bottlenecks, we propose a \textit{Generalized Neural Operator} that bridges these paradigms by restoring mathematical rigor through explicit physical conditioning.
  By formalizing the classical conditions for well-posedness within neural operators, our framework demonstrates the theoretical benefits of explicitly conditioning on PDE parameters and boundary conditions.
  To implement this synthesis without compromising computational speed, we introduce three novel architectural components: a parameter-gated mixture of kernels for efficient parameter generalization, a generalized boundary transfer operator that projects arbitrary boundary constraints into a unified latent Dirichlet representation, and a specialized training objective to ensure stability.
  Extensive experiments demonstrate that our theoretically grounded approach achieves superior generalization across heterogeneous physical regimes while maintaining strict inference efficiency comparable to conventional numerical baselines.

\end{abstract}

\section{Introduction}
\label{sec:intro}
Partial Differential Equations (PDEs) constitute the standard mathematical framework for modeling physical phenomena. To fully characterize a specific physical system, these equations are formulated as parametric boundary value problems, in which the combination of equation parameters and boundary conditions fully defines the underlying physics. For example, the heat equation serves as a canonical model for understanding diffusion processes, describing how a quantity such as temperature evolves over time to reach equilibrium. $\dfrac{\partial u(\bm{x}, t)}{\partial t} = \alpha \nabla^2 u(\bm{x}, t)$, where $u(\bm{x}, t)$ represents the temperature at spatial position $\bm{x}$ and time $t$. Central to this physical model is the thermal diffusivity parameter, $\alpha$, which dictates the intrinsic rate at which heat propagates through a specific medium. A high diffusivity characterizes materials like copper that rapidly conduct thermal energy, while low diffusivity defines insulators that retard this flow. The physical behavior of the system is also constrained by its interactions with the environment, defined mathematically as boundary conditions. Dirichlet condition, $u(\bm{x}, t) = f(\bm{x}, t)$ for $\bm{x} \in \partial\Omega$, is a scenario where the boundary temperature is strictly controlled, such as an object submerged in a thermal bath. Neumann condition, $\frac{\partial u}{\partial \bm{n}}(\bm{x}, t) = g(\bm{x}, t)$ for $\bm{x} \in \partial\Omega$, with $\mathbf{n}$ denoting the outward unit normal vector, represents the control of heat flux at the boundary, where a homogeneous value, $g(\bm{x}, t)=0$, implies a perfectly insulated surface that allows no energy escape. Periodic conditions simulate a ring-like or infinite geometry where heat leaving one side re-enters the other. Thus, because real-world applications involve diverse materials and varying environmental interactions, learning a foundational physical neural model requires generalization across PDE parameters and boundary conditions.

\begin{wrapfigure}{r}{0.5\textwidth}
  \centering
  \includegraphics[width=0.45\textwidth]{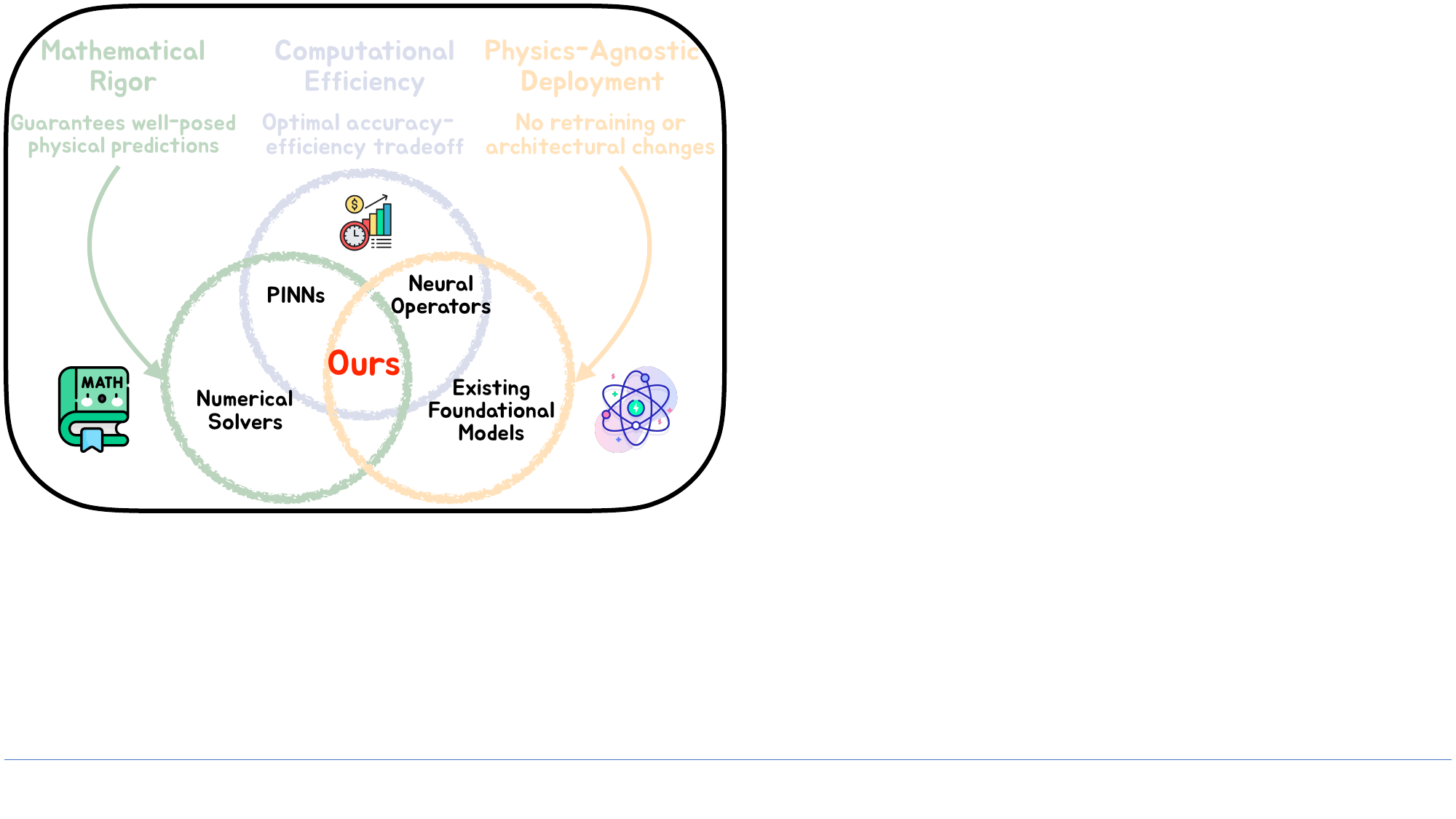}
  \caption{Trilemma of existing PDE solvers. Traditional numerical solvers guarantee mathematical rigor but are computationally expensive and demand instance-specific design. PINNs maintain this rigor and offer efficient inference, yet still require instance-specific retraining. Neural operators achieve favorable accuracy-efficiency tradeoffs and physics-agnostic deployment, but often yield ill-posed predictions. Existing foundational models achieve physics-agnostic deployment but suffer from computational inefficiency due to their massive size. Our proposed method unifies the core strengths of all three paradigms.}
  \label{fig:intro}
\end{wrapfigure}

Existing PDE solvers for these parametric boundary value problems struggle to balance three competing objectives: computational efficiency, mathematical rigor, and physics-agnostic deployment. Traditional numerical methods, such as finite difference and finite element solvers, prioritize mathematical rigor by computing state evolutions with strict, explicit adherence to governing equations. However, they are extremely computationally expensive and require entirely different, instance-specific algorithmic designs. For example, a seemingly simple shift from Dirichlet to periodic boundary conditions cannot be handled via a simple switch. Instead, it demands completely replacing a finite difference scheme with a fundamentally different algorithm, such as a pseudo-spectral method. Physics-Informed Neural Networks (PINNs)~\citep{RAISSI2019686} maintain mathematical rigor by embedding explicit equations into the loss function, but they remain instance-specific, requiring costly retraining for every new condition and thereby sacrificing agnostic deployment. Purely data-driven Neural Operators~\citep{kovachki2023neuraloperator} achieve the "learn once, evaluate instantly" advantage by learning infinite-dimensional mappings from large-scale datasets and demonstrate superior accuracy-efficiency tradeoffs~\citep{wang2025fdbenchmodularfairbenchmark}. However, by treating the physics implicitly and attempting to deduce governing laws primarily from historical frames, they pose an ill-posed inverse problem, sacrificing mathematical rigor and leaving them vulnerable to generalization failure.

Current attempts to build foundational physical models struggle precisely because they fail to resolve this trilemma. In pursuit of physics-agnostic deployment, most lean heavily into the data-driven operator paradigm but abandon the explicit physical grounding that makes numerical solvers and PINNs robust. By omitting explicit boundary information and parameter conditioning, these models often fall back on dataset biases, producing averaged, non-physical predictions. Furthermore, to compensate for this lack of explicit rigor, foundational models are scaled to massive sizes, which ironically destroys their computational efficiency. Given that recent studies~\citep{wang2025fdbenchmodularfairbenchmark, McGreivy_2024} show even lightweight operators (e.g., the 0.5M parameter Fourier Neural Operator~\citep{li2021fourier}) offer only marginal speedups over highly optimized numerical solvers, the practical utility of deploying massive, multi-million parameter models to guess implicit physics is highly questionable.

To definitively bridge the gap between computational efficiency, mathematical rigor, and condition-agnostic deployment, we propose a \textit{Generalized Neural Operator}. We formalize the well-established classical conditions for well-posedness within the context of neural operators, demonstrating that restoring mathematical rigor through explicit physical inputs is the key to true generalization. To achieve this synthesis without sacrificing efficiency, our architecture introduces a parameter-gated mixture of kernels, dynamically routing computation based on explicit physical regimes rather than activating a monolithic network. This ensures physics-agnostic deployment while maintaining runtime efficiency strictly comparable to traditional numerical solvers. Furthermore, we propose a generalized boundary transfer operator that projects arbitrary, explicitly provided boundary conditions into a unified latent Dirichlet representation, allowing the core network to maintain rigor while seamlessly generalizing across configurations. Finally, we introduce a specialized training objective to ensure robust performance across diverse physical regimes.

Our contributions are as follows: \textbf{(i) Paradigm Synthesis \& Problem Identification:} We identify the fundamental struggle in current foundational PDE models and propose a framework that resolves this trilemma. \textbf{(ii) Theoretical Formalization:} We formalize the well-posedness of foundational neural solvers, demonstrating theoretically why the structural inclusion of PDE parameters and boundary conditions is beneficial. \textbf{(iii) Practical Solution:} We introduce a parameter-gated mixture of kernels, a generalized boundary transfer operator, and a robust optimization objective to achieve generalization across diverse physical regimes. \textbf{(iv) Experimental Validation:} We validate the superior generalization performance of our proposed model through extensive empirical benchmarks.

\section{Related Work}
\label{sec:related}
Early research on neural operators primarily focuses on learning PDEs with single parameters and fixed boundary conditions. This body of work encompasses Eulerian neural simulators~\citep{li2021fourier, li2023physicsinformedneuraloperatorlearning, pfaff2021learning, brandstetter2022message, rahman2023uno, wu2024Transolver, luo2025transolver, li2023geometryinformedneuraloperatorlargescale, tran2023factorized, wu2023LSM}, which operate on regular grids or irregular meshes, and Lagrangian neural simulators~\citep{sanchezgonzalez2020learning, toshev2024lagrangebench, Prantl2022Conserving, toshev2024neuralsphimprovedneural}, which discretize the domain into particles.

Recent initiatives have sought to develop foundational models that extend the capabilities of neural operators in several directions. Research such as \citet{alesiani2022hyperfno, takamoto2023learningneuralpdesolvers} focuses on generalizing operators across PDE parameters, while \citet{boudec2025learning} introduces a physics-guided iterative training algorithm to handle diverse parameters and boundary conditions. Unified frameworks have also emerged, including the work by \citet{alkin2024upt} for joint Eulerian and Lagrangian simulations and \citet{chen2025omniarch} for multiscale pretraining across 1D, 2D, and 3D domains. \citet{ye2024pdeformer} utilizes computational graphs to represent various PDEs, and \citet{shen2024ups} embeds systems into shared spaces to enable generalization across dimensions and resolutions. % \YS{limitation of these work?}

\citet{saad2023guiding} investigates architectural designs for enforcing exact boundary conditions, and \citet{Yang_2023, yang2024pdegeneralizationincontextoperator} employs in-context learning to infer operators dynamically using example pairs as prompts. Furthermore, transformer-based multiphysics solvers~\citep{hao2024dpot, wang2025mixtureofexpertsoperatortransformerlargescale, mccabe2024multiple, subramanian2023towards, sun2025foundationmodelpartialdifferential, herde2024poseidon, zhou2025unisolver} conduct pretraining on large-scale datasets containing numerous PDEs and conditions. Some of these works rely on fine-tuning to generalize well to new physics. We approach the problem from a numerical PDE perspective by learning models that generalize to new parameters and boundary conditions without any instance-specific fine-tuning. % \YS{do we need fine-tuning?}

The landscape of scientific machine learning encompasses a diverse array of applications that extend far beyond forward modeling.  Significant research has been dedicated to tasks such as inferring dense fluid fields from sparse observations~\citep{li2025flow, zhong2023sparse, mo2024reconstructingunsteadyflowssparse, yadav2025rfpinns, jing2024airflow, He_2022_flow} and enhancing the performance of classical numerical schemes with neural networks~\citep{list2022learned, sun2023neural, greenfeld2019learning, sappl2019deep}. Furthermore, neural networks play a crucial role in engineering workflows through aerodynamic shape optimization~\citep{elrefaie2025drivaernet, NEURIPS2024_013cf29a} and inverse design methodologies~\citep{behrmann2019invertible, teng2019invertible, kruse2021benchmarking}. While these contributions fall broadly within the same context of scientific machine learning as our work, they address objectives distinct from the primary focus of this study.

% While our research aligns with foundational PDE solvers, it makes distinct contributions by explicitly incorporating PDE parameters and boundary conditions as inputs.cWe focus on generalizing within specific PDE families, rather than disparate physical systems, to ensure superior efficiency compared to conventional numerical solvers. Our work is most closely related to \citet{zhou2025unisolver}, which also employs explicit incorporation. However, we diverge by studying a more efficient architecture rather than a large-scale model with millions of parameters, which ensures efficiency against numerical solvers. We provide theoretical justification for the necessity of explicit incorporation of PDE parameters and boundary conditions to ensure well-posedness. Crucially, we generalize beyond the scalar parameters and local boundary conditions found in \citet{zhou2025unisolver}. Our approach handles spatially dependent coefficients, such as the velocity field in the Advection equation (Section~\ref{sec:exp}), and employs a \textit{Generalized Boundary Transfer Operator} to support arbitrary boundary constraints, including integral and non-local conditions.

\begin{figure*}[t]
    \centering
    \includegraphics[width=0.80\textwidth]{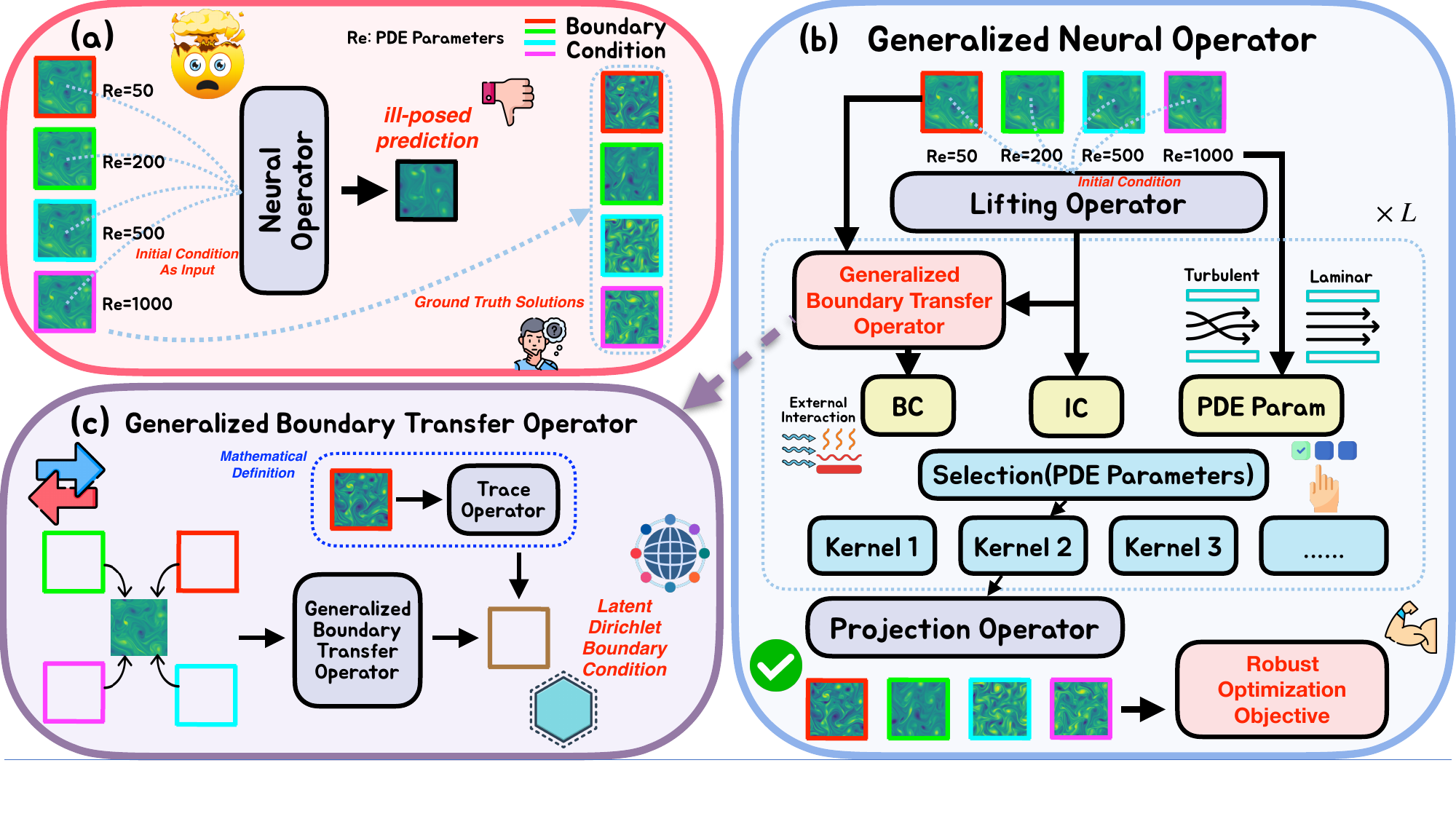}
    \caption{
    \textbf{(a)} Lack of physical inputs renders the problem ill-posed, causing the model to converge toward an averaged solution.
    \textbf{(b)} We propose a \textit{Generalized Neural Operator}, which incorporates a parameter-gated mixture of kernels and a robust optimization training objective to improve generalization.
    \textbf{(c)} To enhance boundary generalization, we introduce a generalized boundary transfer operator that maps arbitrary boundary conditions into a latent Dirichlet boundary state. This unified modality significantly aids model generalization.
    }
    \label{fig:framework}
\end{figure*}

\section{Problem Statement}
\label{sec:problem}
We focus on generalizing within specific PDE families, rather than disparate physical systems, to ensure efficiency compared to conventional numerical solvers. We consider the \textbf{Parametric Boundary Value Problem}. Let $\Omega \subseteq \mathbb{R}^d$ be a bounded domain with boundary $\partial \Omega$. We consider a family of first-order in time partial differential equations with parameters $\bm{\theta} \in \Theta \subseteq \mathbb{R}^p$:
\begin{align} \label{eq:problem}
\begin{cases} 
\dfrac{\partial \bm{u}}{\partial t} + \mathcal{L}_{\bm{\theta}}[\bm{u}] = 0 & (\bm{x}, t) \in \Omega \times (0, T] \\ 
\mathcal{B}[\bm{u}] = 0 &  \\
\bm{u}(\bm{x}, 0) = \bm{u}_0(\bm{x}) & \bm{x} \in \Omega 
\end{cases}
\end{align}

where $\mathcal{L}_{\bm{\theta}}$ is the parameterized differential operator, $\bm{u}_0$ is the initial condition, and $\bm{u} (\bm{x}, t)$ is the solution of interest. $\mathcal{B}$ is the boundary operator. This operator specifies the type of the boundary constraint enforced and the associated boundary data, such as the specific Dirichlet values or the magnitude of the Neumann flux. We adopt the notation where square brackets $[\cdot]$ denote an operator on a function, while parentheses $(\cdot)$ denote the evaluation of a function at specific coordinates. % \YS{notation wise, what's the difference between [] and ()?}

Let $\mathcal{Y}$ denote the topological vector space representing the solution trajectory. Let $\mathcal{X}_{\text{init}}$ denote the topological vector space representing the initial conditions. We define the solution operator $\mathcal{S}$ with respect to a fixed configuration of the PDE structure. Specifically, for a fixed parameter $\bm{\theta}$ and boundary $\mathcal{B}$, the operator $\mathcal{S}_{\bm{\theta}, \mathcal{B}}: \mathcal{X}_{\text{init}} \rightarrow \mathcal{Y}$ maps the initial condition to the solution trajectory:
\begin{align}
    \bm{u} = \mathcal{S}_{\bm{\theta}, \mathcal{B}}(\bm{u}_0)
\end{align}
where $\bm{u}$ satisfies the system in Equation~\ref{eq:problem}. 

Assume the structural form of the differential operator $\mathcal{L}$ is fixed. We consider variations in the PDE parameters $\bm{\theta} \in \Theta$,  boundary operator $\mathcal{B} \in \mathbb{B}$, where $\mathbb{B}$ is the set of admissible boundary operators. Let $\mathfrak{S} = \{ \mathcal{S}_{\bm{\theta}, \mathcal{B}} \mid \bm{\theta} \in \Theta , \mathcal{B} \in \mathbb{B}  \}$ denote the set of all solution operators arising from admissible combinations of these varying components. Our goal is to learn a \textbf{Generalized Neural Operator} $\mathcal{M}_{\phi}$, parameterized by $\phi$, that approximates the family of operators in $\mathfrak{S}$.

\section{Method}
We first formalize the theoretical conditions required for operator well-posedness, demonstrating that explicit input of PDE parameters and boundary conditions is beneficial. Building on this mathematical foundation, we introduce a parameter-gated mixture of kernels, a generalized boundary transfer operator, and a robust optimization training objective.

\subsection{Theoretical Formalization}

\begin{proposition}
\label{prop:well}
    Let $\mathfrak{S} = \{ \mathcal{S}_{\bm{\theta}, \mathcal{B}} \mid \bm{\theta} \in \Theta , \mathcal{B} \in \mathbb{B} \}$ be the family of solution operators defined in Equation~\ref{eq:problem}.  Then, a single-valued operator $\mathcal{M}_{\phi}: \mathcal{X}_{\mathrm{init}} \to \mathcal{Y}$ defined solely on the initial condition space cannot universally represent the family $\mathfrak{S}$. For the \textit{Generalized Neural Operator} to be well-defined as a function approximating $\mathfrak{S}$, its domain must be augmented to include the parameter spaces $\Theta$ and boundary operator space $\mathbb{B}$.
\end{proposition}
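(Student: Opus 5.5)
The idea is to exhibit two admissible configurations that share an initial condition but produce different trajectories. If two such configurations exist, the target relation $\{(\bm{u}_0, \mathcal{S}_{\bm{\theta},\mathcal{B}}(\bm{u}_0))\}$ is not a function of $\bm{u}_0$ alone. We therefore read the claim as follows: under the nondegeneracy assumption that $\mathfrak{S}$ contains at least two distinct operators, no single map $\mathcal{M}_\phi:\mathcal{X}_{\mathrm{init}}\to\mathcal{Y}$ satisfies $\mathcal{M}_\phi = \mathcal{S}$ for every $\mathcal{S}\in\mathfrak{S}$. Some such assumption is needed, since if all $\mathcal{S}_{\bm{\theta},\mathcal{B}}$ coincided the statement would be false.

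\textbf{Step 1 (contradiction).} Suppose $\mathcal{M}_\phi$ represents $\mathfrak{S}$ universally, i.e.\ $\mathcal{M}_\phi(\bm{u}_0)=\mathcal{S}_{\bm{\theta},\mathcal{B}}(\bm{u}_0)$ for all $(\bm{\theta},\mathcal{B})\in\Theta\times\mathbb{B}$ and all $\bm{u}_0$. Pick $(\bm{\theta}_1,\mathcal{B}_1)$ and $(\bm{\theta}_2,\mathcal{B}_2)$ with $\mathcal{S}_1\neq\mathcal{S}_2$. Then there is some $\bm{u}_0$ with $\mathcal{S}_1(\bm{u}_0)\neq\mathcal{S}_2(\bm{u}_0)$. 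Both values must equal $\mathcal{M}_\phi(\bm{u}_0)$, which contradicts single-valuedness.

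\textbf{Step 2 (concrete witnesses for nondegeneracy).} Using the heat equation from Section~\ref{sec:intro}, we produce explicit distinct operators. For parameters, take the Fourier mode $\bm{u}_0=\sin(\pi x)$ on $(0,1)$ with homogeneous Dirichlet conditions. The solution is $e^{-\alpha\pi^2 t}\sin(\pi x)$, so $\alpha_1\neq\alpha_2$ gives different trajectories for every $t>0$. For boundaries, keep $\alpha$ fixed and take $\bm{u}_0=\cos(\pi x)$. Under homogeneous Neumann conditions the solution decays as $e^{-\alpha\pi^2 t}\cos(\pi x)$. Under homogeneous Dirichlet conditions the solution has zero boundary values for $t>0$, whereas the Neumann solution has nonzero boundary values, so the two differ. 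These examples show the ambiguity arises from each of the two components separately: fixing $\bm{\theta}$ still leaves $\mathcal{B}$ ambiguous, and fixing $\mathcal{B}$ still leaves $\bm{\theta}$ ambiguous.

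\textbf{Step 3 (sufficiency of augmentation).} Define $\mathcal{G}:\mathcal{X}_{\mathrm{init}}\times\Theta\times\mathbb{B}\to\mathcal{Y}$ by $\mathcal{G}(\bm{u}_0,\bm{\theta},\mathcal{B})=\mathcal{S}_{\bm{\theta},\mathcal{B}}(\bm{u}_0)$. This is single-valued provided each problem~\eqref{eq:problem} has a unique solution, which is the Hadamard well-posedness assumption on the admissible family. Since Step 2 shows that dropping either $\Theta$ or $\mathbb{B}$ from the domain reintroduces a collision, both components are needed, and this is what the proposition asserts.

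\textbf{Main obstacle.} The statement is informal, so the work lies in stating the hypotheses precisely: nontrivial dependence of $\mathcal{S}$ on $\bm{\theta}$ and on $\mathcal{B}$, and uniqueness of solutions for each configuration. The contradiction argument itself is immediate. Verifying the Dirichlet-versus-Neumann example rigorously is also routine.
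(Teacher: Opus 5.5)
Your Step 1 is the paper's proof. The paper assumes existence, uniqueness, and non-degeneracy (two configurations $c_1\neq c_2$ with $\mathcal{S}_{c_1}(\bm{u}_0)\neq\mathcal{S}_{c_2}(\bm{u}_0)$ for a suitable $\bm{u}_0$), then derives the same contradiction from the single-valuedness of $\mathcal{M}_\phi$. Your heat-equation witnesses and the explicit augmented map $\mathcal{G}$ go beyond the paper, whose proof only concludes that $\mathcal{M}_\phi$ cannot be a function of $\bm{u}_0$ alone. One caveat: Step 2 certifies the separate necessity of $\Theta$ and $\mathbb{B}$ only for the heat equation, so for a general family that claim in Step 3 rests on the component-wise nondegeneracy hypotheses you list at the end.
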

\begin{proof}
    See Appendix~\ref{appendix:proof_well}
\end{proof}
We present the well-posedness of the neural operator in Proposition~\ref{prop:well}. It reveals a fundamental mathematical limitation in purely implicit foundational models. Since identical initial conditions can lead to entirely different trajectories under distinct physical configurations, an operator that maps solely from the initial condition space to the solution space lacks sufficient constraints to isolate a single, physically valid outcome.

\begin{myremark}{Ill-Posedness of Implicit Parameter Inference}
    While recent works~\citep{yin2022leadslearningdynamicalsystems, koupai2024gepsboostinggeneralizationparametric, nzoyem2025neural, koupai2025enma} have demonstrated competitive empirical performance when implicitly inferring PDE parameters from historical state trajectories, this approach fundamentally reframes the forward simulation as a hidden inverse problem. Mathematically, deducing governing parameters solely from observed states is inherently ill-posed. Consequently, reliance on implicit inference introduces an irreducible epistemic uncertainty.
\end{myremark}

\begin{myremark}{Causal Impossibility of Implicit Boundary Inference}
    While static boundaries can be inferred from rich historical context, dynamic or time-dependent boundary conditions present a hard causal barrier. Consider a time-varying Dirichlet boundary condition. It is fundamentally impossible for a neural operator to causally deduce the unobserved, future state of this external influence from the system's past sequence. In such scenarios, implicit inference is strictly impossible.
\end{myremark}

We formulate \textit{Generalized Neural Operator} $\mathcal{M}_{\phi}$ over an augmented domain that incorporates the PDE configuration. We define $\mathcal{M}_{\phi}$ as a mapping from the Cartesian product of the initial condition space, PDE parameter space, and boundary operator space to the solution trajectory space. Mathematically, this is expressed as:
\begin{align}
    \mathcal{M}_{\phi} : \mathcal{X}_{\text{init}} \times \Theta \times \mathbb{B} \to \mathcal{Y}
\end{align}
Under this formulation, the neural operator takes the tuple $(\bm{u}_0, \bm{\theta}, \mathcal{B})$ as input and predicts the corresponding solution trajectory $\hat{\bm{u}}$ that approximates the true solution defined by the operator $\mathcal{S}_{\bm{\theta}, \mathcal{B}}(\bm{u}_0)$. This augmentation ensures that $\mathcal{M}_{\phi}$ is well-defined, capable of distinguishing between varying physical dynamics and boundary constraints. The PDE parameters are embedded directly from their numerical values, while the boundary conditions are encoded using a combination of their type and the specific boundary data.

Our approach readily generalizes to higher-order temporal PDEs by augmenting the input with additional initial snapshots. For example, in Newtonian dynamics governed by second-order time derivatives, the input would consist of both the initial position and the initial velocity.

\subsection{Parameter-Gated Mixture of Kernels}
Building upon the neural operator framework~\citep{kovachki2023neuraloperator}, our architecture consists of three components: (1) a local lifting transformation, (2) a sequence of iterative kernel integration layers, and (3) a local projection to the output space. A central innovation in our approach is a parameter-gated mixture of kernels, where the integral kernels are dynamically selected based on the PDE parameters.
\begin{align}
    \mathcal{M}_\phi = \left( \mathcal{Q} \circ \Phi_L \circ \Phi_{L-1} \circ \cdots \circ \Phi_1 \circ \mathcal{E} \right),
\end{align}
where $L$ denotes the depth of the network. The lifting operator $\mathcal{E}$ maps the physical inputs to a higher-dimensional latent feature $\bm{v}^{(0)}$. The operator, $\Phi_{l+1}$, transforms the hidden representation $\bm{v}^{(l)}$ to $\bm{v}^{(l+1)}$ via the sum of a local linear operator and a non-local integral kernel. The final representation $\bm{v}^{(L)}$ is mapped to the target solution $\bm{u} \in \mathcal{Y}$ via the projection operator $\mathcal{Q}$.

To generalize across the parameter space $\Theta$, the neural operator must adapt its integration geometry based on $\bm{\theta}$. In a standard neural operator, the integral kernel acts as a static mechanism, applying the same spatial aggregation rules across all problem instances. We overcome this limitation by replacing the static kernel with a parameter-gated mixture of kernels, allowing the network to dynamically assemble a custom integration rule for each specific PDE instance.

Let $\bm{W}$ be a learnable weight matrix performing a linear transformation, $\bm{\kappa}_k$ be $K$ distinct base kernel functions that define the non-local integral operator, $g_k$ be a hard gating function, and $\sigma$ be an activation function. $\bm{h}^{(l)}$ is the latent Dirichlet boundary condition created by the generalized boundary transfer operator as introduced in Section~\ref{sec:gbto}. We use the angle brackets $\langle \cdot, \cdot, \cdot \rangle$ to denote concatenation. 

To update the state at any specific target point $\bm{x}$, the integral simultaneously gathers information from surrounding spatial points (the physical domain) and from the governing conditions (boundaries and parameters). Physically, $\bm{\kappa}_k(\bm{x}, \bm{y})$ calculates the influence of the features at source point $\bm{y}$ on the target point $\bm{x}$. We project the PDE parameters into a higher-dimensional feature space, denoted by $\tilde{\bm{\theta}}$, via a learnable transformation. The update rule for the hidden state $\bm{v}^{(l)}(\bm{x})$ at layer $l$ is given by: 
\begin{equation} \label{eq:kernel}
\resizebox{0.9\textwidth}{!}{%
$
\displaystyle \bm{v}^{(l+1)}(\bm{x}) = \sigma\bigg( \bm{W} \langle \tilde{\bm{\theta}}, \bm{h}^{(l)}, \bm{v}^{(l)}(\bm{x}) \rangle + \int_{\Omega \cup \mathcal{X}_{ctx}} \left( \sum_{k=1}^{K} g_k(\bm{\theta}) \bm{\kappa}_k(\bm{x}, \bm{y}) \right) \bm{\xi}(\bm{y}) \, d \mu(\bm{y}) + \bm{b}(\bm{x}) \bigg), \quad \forall \bm{x} \in \Omega
$
}
\end{equation}
The measure $\mu(\bm{y})$ is defined as the standard Lebesgue measure for $\bm{y} \in \Omega$ and the counting measure for $\bm{y} \in \mathcal{X}_{ctx}$. The augmented input function $\bm{\xi}(\bm{y})$ represents the latent state $\bm{v}^{(l)}(\bm{y})$ for $\bm{y} \in \Omega$, and the conditioning information (parameter $\bm{\theta}$ or boundary $\bm{h}$) for $\bm{y} \in \mathcal{X}_{ctx}$. In practice, we adopt self-attention~\citep{vaswani2023attentionneed} kernels as this is the best approach reported in \citet{wang2025fdbenchmodularfairbenchmark}.

We employ the hard-gating strategy to select the most relevant kernel for a given parameter configuration. Let $s_k(\bm{\theta})$ be a learnable, lightweight score function that measures the affinity between the physical parameter $\bm{\theta}$ and the kernel $\kappa_k$. We compute the selection probabilities $\tilde{g}_k$ and the discrete gates $g_k$ as follows:
\begin{align}
    \tilde{g}_k(\bm{\theta}) = \frac{\exp(s_k(\bm{\theta}))}{\sum_{j=1}^{K} \exp(s_j(\bm{\theta}))}, \ \
    g_k(\bm{\theta}) = 
    \begin{cases} 
    1, & \text{if } k = \arg\max_j \tilde{g}_j(\bm{\theta}) \\ 
    0, & \text{otherwise} 
    \end{cases}
\end{align}
\textbf{Motivation:} Our parameter-gated mixture of kernels is motivated by both experimental observations and physical intuition. We initially tested strong single-kernel baselines that explicitly encode PDE parameters, specifically \textit{CAPE+FNO} and \textit{CAPE+Attn} from \citet{takamoto2023learningneuralpdesolvers}, as well as \textit{Unisolver}~\citep{zhou2025unisolver}. We observed that performance varies significantly depending on the PDE parameters, suggesting that single-kernel models often excel in specific domains while failing in others. This necessitates a mixture of kernels to better adapt to the diverse physical conditions created by varying PDE parameters. We refer readers to the Appendix~\ref{appendix:why_mixture} for full experimental details.

Physically, variations in PDE parameters often drive the system across distinct behavioral regimes rather than merely scaling the output magnitude. A classic example is the Reynolds number in fluid dynamics. Shifting this parameter transitions the system from laminar flow to turbulence. A single kernel forces a compromise, attempting to learn an average operator that is often too diffusive for turbulence yet too reactive for laminar states. Our parameter-gated approach mirrors this physical reality by treating different parameter spaces as distinct regimes.

Further, as emphasized in Section~\ref{sec:intro}, efficiency is paramount for the practical deployment of foundational PDE solvers. Our mixture of kernels incurs minimal computational overhead during inference, allowing the model to achieve superior accuracy while maintaining a significant speed advantage over traditional numerical solvers.

\textbf{Discussion with Related Work: }
Mixture of kernel-based neural operators have been explored in previous works. \citet{deighan2025mixtureneuraloperatorexperts} use all kernels where gating depends on the spatial coordinates. \citet{wang2025mixtureofexpertsoperatortransformerlargescale} use more than one kernel and use latent initial conditions as gating. By contrast, we only select one kernel to ensure efficiency and use PDE parameters for explicit domain partition. These innovations, though simple, are crucial for model generalization and practical utility.

\subsection{Generalized Boundary Transfer Operator}
\label{sec:gbto}
\textbf{Mathematical Motivation: }
The Dirichlet-to-Neumann (DtN) and Neumann-to-Dirichlet (NtD) operators~\citep{arendt2014dirichlet, sauter2013degenerate, bossavit1991scalar, 10.1093/oso/9780198501787.001.0001, oberai1998implementation, knockaert2008complex} act as inverse transfer functions for the domain. Let $\Omega_E$ be a bounded domain with a smooth boundary $\partial\Omega_E$, governed by a uniformly elliptic linear operator $\mathcal{L}_E$. The Dirichlet-to-Neumann (DtN) Operator, $\Lambda$, determines the necessary Neumann flux to sustain a given Dirichlet boundary state. Formally, $\Lambda(\bm{\varphi}) := \partial_{\bm{n}} \bm{u}_E|_{\partial\Omega_E}$, where $\bm{\varphi}$ is the Dirichlet boundary value. The Neumann-to-Dirichlet (NtD) Operator, $\eta$, operating in reverse, yields the Dirichlet boundary values that result from an applied Neumann flux. $\eta(\bm{\psi}) := \bm{u}_E|_{\partial\Omega_E}$, where $\bm{\psi}$ is a prescribed Neumann boundary flux.

\textbf{Proposed Method: } While foundational theory and the majority of existing literature on these operators focus primarily on time-independent elliptic PDEs with standard Neumann or Dirichlet conditions, we propose a more versatile approach. We propose to learn a generalized boundary transfer operator designed to map arbitrary boundary conditions to a shared latent Dirichlet representation. By unifying diverse boundary inputs into a single modality, this method allows the neural operator kernel to specialize in one type of boundary condition, thereby reducing complexity and enhancing model generalization. For time-dependent PDEs, strictly mapping a boundary condition to a Dirichlet output without the initial condition is an ill-posed problem. Thus, our transfer operator also incorporates the initial condition for capturing the system's complete boundary dynamics.

Let $\gamma: \mathcal{Y} \to \mathcal{Y}_{\partial\Omega}$ denote a trace operator, which restricts a function defined on the domain $\Omega$ to a latent value on the boundary $\partial\Omega$. We define this latent value as the latent Dirichlet value.
\begin{align}
    \bm{h} := \gamma(\bm{u}) := \gamma \left( \mathcal{S}_{\bm{\theta}, \mathcal{B}}(\bm{u}_0) \right),
\end{align}
where $\bm{u}$ is the solution trajectory of the PDE and $\bm{h}$ is the latent Dirichlet value. The trace operator $\gamma$ is a purely theoretical construct utilized solely to formally define the latent Dirichlet value. Because defining this value requires the ground truth trajectory, which is inherently unavailable during inference, we approximate it in practice using a generalized boundary transfer operator, $\mathcal{T}_{\bm{\lambda}}$, parametrized by $\bm{\lambda}$. We design $\mathcal{T}_{\bm{\lambda}}$ to function similarly to the classical DtN and NtD operators that map boundary data from one type to another. This operator maps the boundary configuration encapsulated by $\mathcal{B}$ and the interior dynamics to the equivalent latent Dirichlet data. We use the layer-wise latent state $\bm{v}^{(\ell)}$ as the input to the transfer operator to resolve the state-dependency of boundary interactions. In non-linear and time-dependent PDEs, the influence of a boundary condition is rarely static and is coupled with the instantaneous state of the field near the boundary. This allows the operator to dynamically adjust the boundary forcing based on the evolving physics of the interior domain. We argue that $\bm{v}^{(\ell)}$ serves as a more informative input than the initial condition $\bm{u}_0$ to the transfer operator.
\begin{align}
\bm{h}^{(l)} \approx \mathcal{T}_{\bm{\lambda}}(\mathcal{B}, \bm{v}^{(l)}).
\end{align}
This learned boundary data subsequently serves as the input for Equation~\ref{eq:kernel} to solve the PDE. In our implementation, $\mathcal{T}_{\bm{\lambda}}$ is parameterized as a light-weight, single-head self-attention, whose input consists of boundary type and associated boundary data.
% \YS{what is the input from $\mathcal{B}$? the pixels on the boundary? }

\subsection{Robust Optimization Objective}
In Appendix~\ref{appendix:why_mixture}, we observe that existing models exhibit large performance disparities across PDE parameters. Beyond physical domain partitioning, we attribute this issue to standard empirical risk minimization under MSE loss. A model can achieve low average loss by perfecting easy, smooth cases, while failing catastrophically on difficult, high-frequency cases. This is especially pronounced in CAPE~\citep{takamoto2023learningneuralpdesolvers} variants. To build a generalized neural solver, we require uniform accuracy across PDE parameters to ensure reliability. To tackle this, we propose a Group Distributionally Robust Optimization~\citep{kuhn2025distributionallyrobustoptimization, Gorissen_2015, Rahimian_2022, blanchet2024distributionallyrobustoptimizationrobust} training objective.
\begin{equation} \label{eq:dro}
    \mathcal{L}_{\text{DRO}}(\phi)
    =
    \max_{\bm{\theta} \in \bm{\Theta}'} \mathbb{E}_{(\bm{u}_0, \bm{u}, \mathcal{B}) \sim \mathcal{D}_{\bm{\theta}}} \left[ \ell(\mathcal{M}_{\phi}(\bm{u}_0, \bm{\theta}, \mathcal{B}), \bm{u}) \right],
\end{equation}
where $\bm{\Theta}'$ denotes the PDE parameter set of a training batch, $\mathcal{D}_{\bm{\theta}}$ denotes the subset of training data corresponding to the PDE parameter $\bm{\theta}$, and $\ell$ represents the MSE. Since the hard maximization in Equation~\ref{eq:dro} is non-differentiable, we approximate the objective using a differentiable Log-Sum-Exp~\citep{zhang2023dive} relaxation.
\begin{equation} \label{eq:soft_dro}
    \mathcal{L}_{\text{DRO}}(\phi) \approx 
    \frac{1}{\tau} \log \left( \frac{1}{|\bm{\Theta}'|} \sum_{\bm{\theta} \in \bm{\Theta}'} e^{ \left( \tau  \mathbb{E}_{(\bm{u}_0, \bm{u}, \mathcal{B}) \sim \mathcal{D}_{\bm{\theta}}} \left[ \ell(\mathcal{M}_{\phi}(\bm{u}_0, \bm{\theta}, \mathcal{B}), \bm{u}) \right] \right)} \right),
\end{equation}
where $\tau$ is the temperature parameter. When the cardinality of the parameter set $|\bm{\Theta}'|$ is prohibitively large, or the space is continuous, we employ a partitioning strategy. We discretize $\bm{\Theta}'$ into coarser bins, treating each bin as a distinct group for the DRO calculation.

Our training curriculum proceeds by iterating through each boundary condition type. In each step, we minimize the DRO objective defined in Equation~\ref{eq:soft_dro}. Detailed procedural steps are provided in Algorithm~\ref{algo:curriculum_dro}. We found that initiating the training directly with the DRO objective is overly restrictive. Thus, we adopt a two-stage curriculum. First, we train the network using standard MSE to establish a baseline approximation of the physics. After $P$ epochs, we switch to the Group DRO loss to enforce uniform accuracy, specifically targeting and refining performance in the most challenging regimes.

\section{Experiment}
\label{sec:exp}
First, we individually validate the model's generalization ability across PDE parameters and boundary conditions by fixing the other to a specific setting. This ensures that we evaluate the model's robustness with respect to only one changing factor at a time. Second, we conduct extensive testing specifically covering simultaneous PDE parameters and boundary conditions variations. We demonstrate that our proposed method achieves superior performance while maintaining computational efficiency against numerical solvers. 

To evaluate generalization across PDE parameters, we utilize four distinct partial differential equations: the Heat, Advection, and Burgers equations, as well as the incompressible Navier-Stokes equations. For the experiments concerning boundary condition generalization and joint pre-training, we focus specifically on the Heat and Advection equations. Detailed descriptions of these datasets and justifications for our benchmark selection are provided in Appendix~\ref{appendix:datasets} and Appendix~\ref{appendix:justify_dataset}. All models predict the next-frame solutions. We evaluate trajectories using a 10-step rollout (extended to 50 steps for the Incompressible Navier-Stokes). We utilize the normalized MSE as the evaluation metric: $\text{nMSE} = \frac{\| \hat{\bm{u}} - \bm{u} \|^2}{ \text{var}( \bm{u} )},$ where $\bm{u}$ and $\hat{\bm{u}}$ are the ground truth and predicted fields.

\begin{wraptable}{r}{0.7\textwidth}
\centering
\scriptsize{%

\setlength\tabcolsep{4pt}%
\renewcommand\arraystretch{1.4}%
\begin{tabular}{l || c c c c}
\hline\thickhline
\rowcolor{CadetBlue!20}
Method & Heat & Advection & Burgers & Inc. NS \\
\hline

% Baseline Rows
ViT-2 & $3.15 \times 10^{-2}$ & $8.95 \times 10^{-1}$ & $9.43 \times 10^{-2}$ & $1.10 \times 10^{-2}$ \\
ViT-5 & $1.49 \times 10^{-2}$ & $2.90 \times 10^{-1}$ & $1.25 \times 10^{-2}$ & $1.50 \times 10^{-2}$ \\
ViT-10 & $1.08 \times 10^{-2}$ & $1.40 \times 10^{-1}$ & $8.36 \times 10^{-2}$ & $6.37 \times 10^{-2}$ \\
Concat & $1.60 \times 10^{-2}$ & $9.48 \times 10^{-1}$ & $4.19 \times 10^{-2}$ & $7.68 \times 10^{-2}$ \\
CAPE+Unet & $3.05 \times 10^{-1}$ & $1.61 \times 10^{-1}$ & $1.30 \times 10^{-2}$ & $1.03 \times 10^{-1}$ \\
CAPE+FNO & $1.28 \times 10^{-2}$ & $1.69 \times 10^{-1}$ & $7.53 \times 10^{-3}$ & $6.34 \times 10^{-2}$ \\
CAPE+Attn & $1.52 \times 10^{-2}$ & $1.54 \times 10^{-1}$ & $1.39 \times 10^{-2}$ & $4.60 \times 10^{-2}$ \\
Unisolver & $6.89 \times 10^{-3}$ & $4.53 \times 10^{0}$ & $9.88 \times 10^{-3}$ & $1.68 \times 10^{-2}$ \\
MoE-POT & $6.84 \times 10^{-1}$ & $1.49 \times 10^{-1}$ & $5.29 \times 10^{-2}$ & $2.10 \times 10^{-2}$ \\

\hline

\rowcolor[HTML]{FFFFE0}
\textbf{Ours} & 
\bm{$4.20 \times 10^{-3}$} & 
\bm{$6.03 \times 10^{-2}$} & 
\bm{$2.82 \times 10^{-3}$} & 
\bm{$5.23 \times 10^{-5}$} \\

\hline

\end{tabular}}%
\caption{Generalization across PDE parameters. We evaluate the normalized MSE on the Heat, Advection, Burgers, and Incompressible Navier-Stokes equations. For each equation, we vary the PDE parameters across the samples in the dataset to test generalization.}
\label{tab:param_results}
\end{wraptable}

\subsection{Generalization across PDE Parameters}
We include \textit{ViT-{k}}~\citep{dosovitskiy2021imageworth16x16words}, which aligns with recent foundational models by relying on initial frames to implicitly infer PDE parameters. $k$ denotes the number of initial frames. We include \textit{Concat}, which follows \citet{subramanian2023towards} and concatenates PDE parameters with initial conditions. We adapt the CAPE mechanism~\citep{takamoto2023learningneuralpdesolvers} to create \textit{CAPE+Unet}, \textit{CAPE+FNO}, and \textit{CAPE+Attn}. Finally, we compare against \textit{Unisolver}~\citep{zhou2025unisolver} and \textit{MoE-POT}~\citep{wang2025mixtureofexpertsoperatortransformerlargescale}. We refer readers to the Appendix~\ref{appendix:implementation} for implementation details.

We present the results in Table~\ref{tab:param_results}. 
Our proposed model achieves superior performance, largely due to the explicit disentanglement in our architecture. By using PDE parameters to govern kernel selection, we decouple control from prediction. The most significant gains are observed in the incompressible Navier-Stokes equation. We hypothesize that the modest parameter variation in this dataset allows our kernels to act as specialized experts and finely resolve the physical dynamics.

% \textbf{Additional Baselines.} The problem corresponds to conditional generation, for which many standard approaches exist. We compare with these baselines in Appendix~\ref{appendix:param_baselines}.

% \textbf{Analysis on Number of Kernels.} We study the effects of the number of kernels on model performance in Appendix~\ref{appendix:num_kernels}.

% \textbf{Analysis on Robust Optimization Hyperparameters.} We study the hyperparameters in Appendix~\ref{appendix:robust_hyerparam}.

\begin{figure}[t]
\centering
\begin{minipage}[t]{0.48\textwidth}
\centering
\scriptsize
\setlength{\tabcolsep}{6pt}
\renewcommand\arraystretch{1.4}
\begin{tabular}{l || c c}
\hline \hline
\rowcolor{CadetBlue!20}
Method & Heat & Advection \\
\hline
ViT-2 & $2.33 \times 10^{-1}$ & $1.56 \times 10^{-1}$ \\
ViT-5 & $1.57 \times 10^{-1}$ & $1.48 \times 10^{-1}$ \\
ViT-10 & $9.18 \times 10^{-2}$ & $1.07 \times 10^{-1}$ \\
Unisolver & $1.15 \times 10^{-1}$ & $7.23 \times 10^{-2}$ \\
MoE-POT & $1.59 \times 10^{-1}$ & $4.14 \times 10^{-1}$ \\
\hline
\rowcolor[HTML]{FFFFE0}
\textbf{Ours} &
\bm{$7.80 \times 10^{-2}$} &
\bm{$5.59 \times 10^{-2}$} \\
\hline \hline
\end{tabular}
\captionof{table}{Generalization across boundary conditions, including Dirichlet, Neumann, and periodic. We report the normalized MSE averaging over different boundary conditions. For Dirichlet and Neumann boundaries, we randomly sample boundary values or fluxes.}
\label{tab:bc_generalization}
\end{minipage}
\hfill
\begin{minipage}[t]{0.48\textwidth}
\centering
\scriptsize
\setlength{\tabcolsep}{6pt}
\renewcommand\arraystretch{1.4}
\begin{tabular}{l || c c}
\hline \hline
\rowcolor{CadetBlue!20}
Method & Heat & Advection \\
\hline
ViT-2 & $2.60 \times 10^{1}$ & $3.33 \times 10^{-1}$ \\
ViT-5 & $2.44 \times 10^{0}$ & $2.51 \times 10^{-1}$ \\
ViT-10 & $2.06 \times 10^{0}$ & $2.28 \times 10^{-1}$ \\
Unisolver & $8.71 \times 10^{-1}$ & $5.02 \times 10^{-1}$ \\
MoE-POT & $1.85 \times 10^{0}$ & $3.80 \times 10^{-1}$ \\
\hline
\rowcolor[HTML]{FFFFE0}
\textbf{Ours} &
\bm{$3.62 \times 10^{-1}$} &
\bm{$7.97 \times 10^{-2}$} \\
\hline \hline
\end{tabular}
\captionof{table}{Generalization across PDE parameters and boundary conditions. We evaluate the normalized MSE on the Heat and Advection equations.}
\label{tab:main_results}
\end{minipage}
\end{figure}

\subsection{Generalization across Boundary Conditions}
Existing boundary enforcement strategies often lack generalization, being limited to specific boundary types~\citep{saad2023guiding} or elliptic PDEs~\citep{wang2024beno}. For comparison, we utilize two standard baselines, \textit{ViT-k}~\citep{dosovitskiy2021imageworth16x16words} and \textit{MoE-POT}~\citep{wang2025mixtureofexpertsoperatortransformerlargescale}, alongside the boundary-aware \textit{Unisolver}~\citep{zhou2025unisolver}.

Table~\ref{tab:bc_generalization} summarizes the results. We observe that increasing the number of initial frames enhances the performance of the \textit{ViT} variants. We attribute this to the temporal smoothness of the boundary data, which allows the model to implicitly extrapolate future boundary conditions from the extended history. Nevertheless, explicit boundary encoding remains superior, as demonstrated by our proposed model achieving the highest accuracy.

% \vspace{-2em}
\subsection{Unified Generalization: PDE Parameters and Boundary Conditions}
% We compare with \textit{ViT-{k}}~\citep{dosovitskiy2021imageworth16x16words}, \textit{Unisolver}~\citep{zhou2025unisolver}, and \textit{MoE-POT}~\citep{wang2025mixtureofexpertsoperatortransformerlargescale}.

\begin{wrapfigure}{r}{0.5\columnwidth}
    \centering

    \begin{subfigure}{0.48\linewidth}
        \centering
        \includegraphics[width=\linewidth]{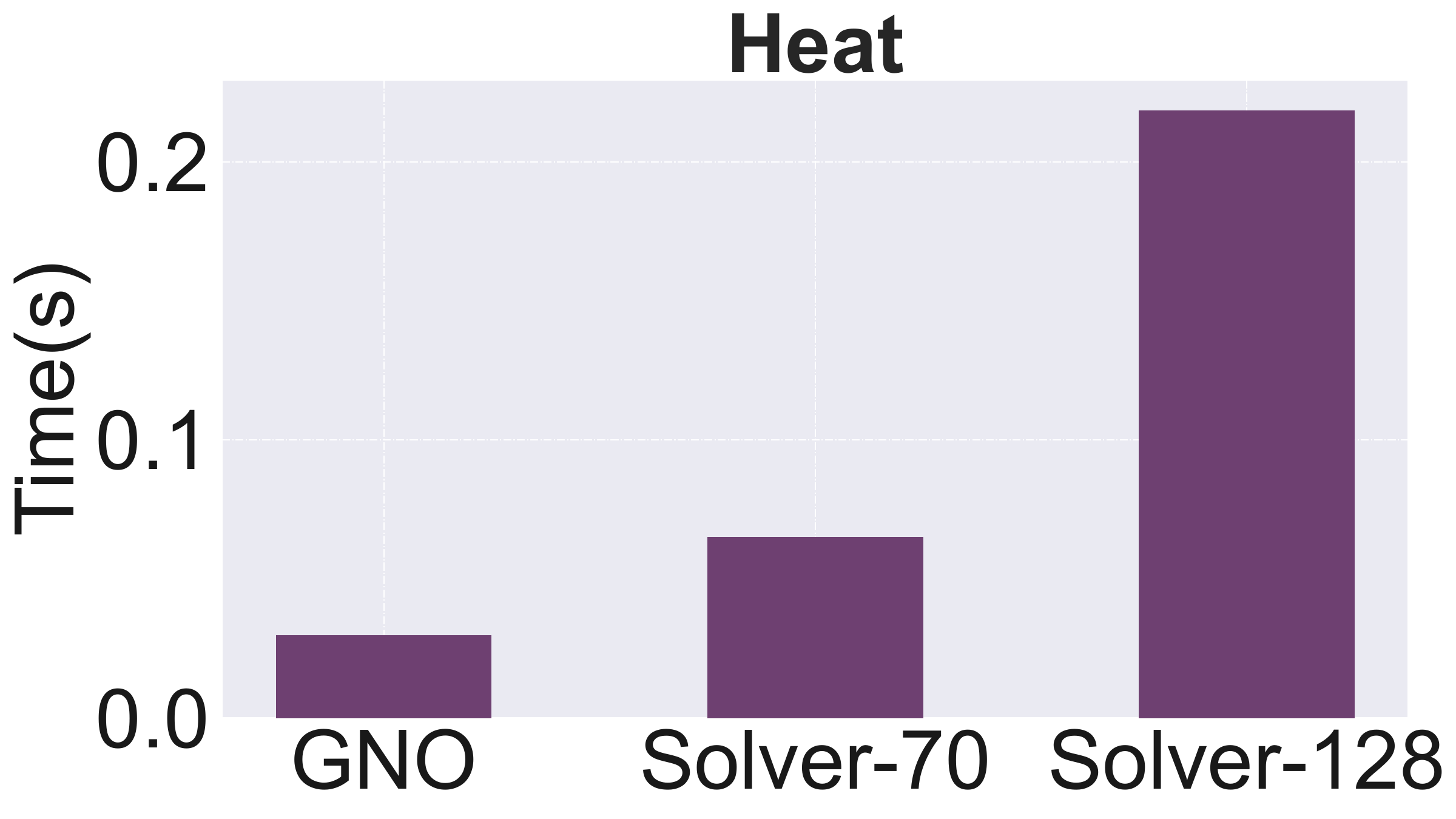}
        \caption{Heat}
    \end{subfigure}
    \hfill
    \begin{subfigure}{0.48\linewidth}
        \centering
        \includegraphics[width=\linewidth]{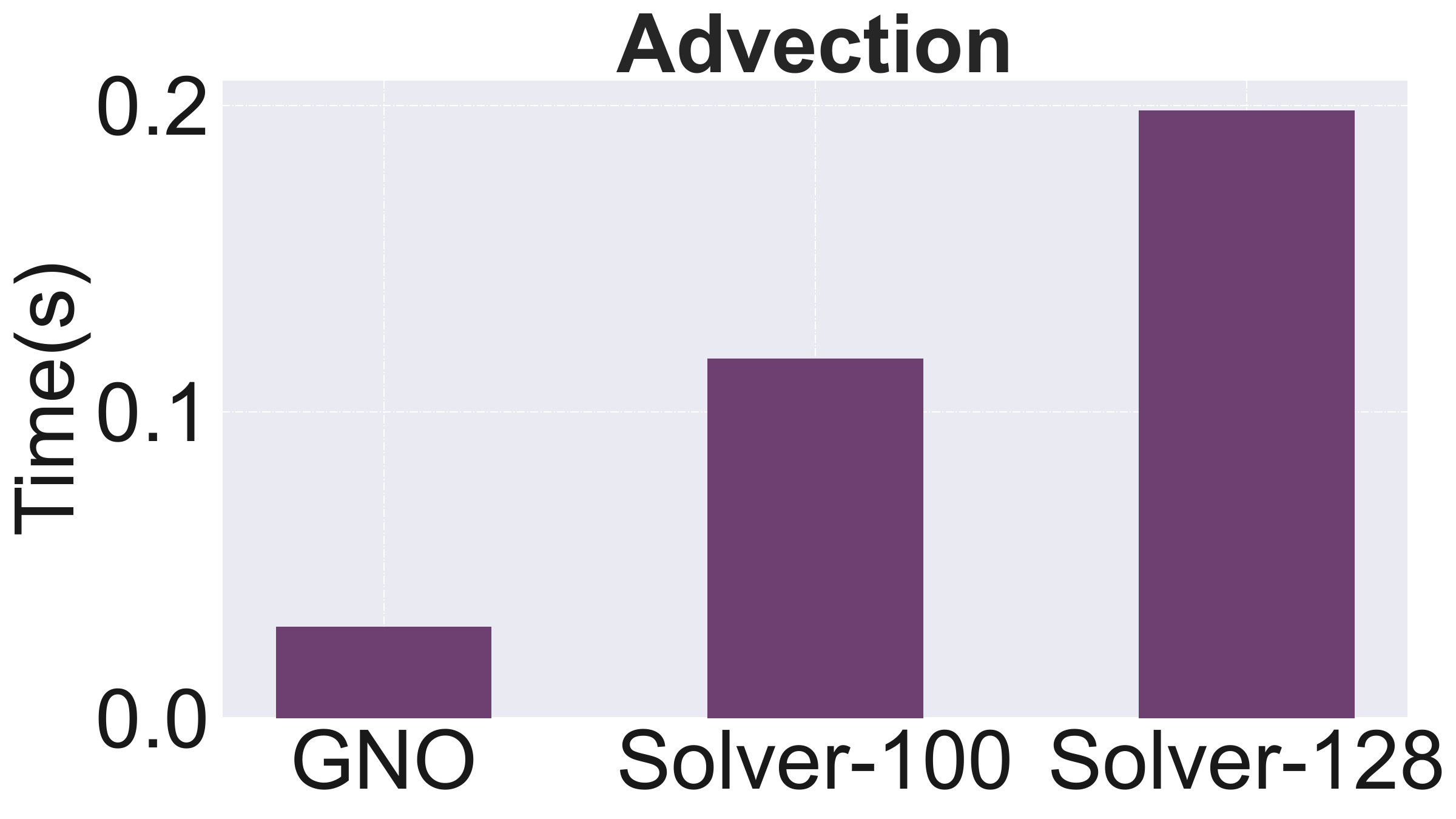}
        \caption{Advection}
    \end{subfigure}
    
    \caption{Runtime comparison between our \textit{GNO} and numerical solvers. The ground truth, operating at a $128 \times 128$ resolution, is denoted as Solver-128. For our accuracy-matched baselines, we utilize Solver-70 and Solver-100, which operate at $70 \times 70$ and $100 \times 100$ resolutions, respectively.}
    \label{fig:runtime}
\end{wrapfigure}

Table~\ref{tab:main_results} presents main results, demonstrating that our proposed model outperforms baselines. This performance not only highlights the critical role of explicit encoding of PDE parameters and boundary conditions, but also validates the superiority of our design choices. The PDE parameters in the Heat equation generate drastically different physical regimes. Combined with varying boundary conditions, this creates a highly complex learning landscape. While several baselines exhibit large nMSE values, suggesting a failure to learn, our method successfully adapts to these varying physical conditions.

Following \citet{McGreivy_2024, wang2025fdbenchmodularfairbenchmark}, we compare the runtime of our neural model against numerical solvers calibrated to lower resolutions that yield equivalent error. Specifically, we progressively reduce the spatial resolution of the numerical solvers until their accuracy matches that of the neural solver. This approach serves as a fair-comparison methodology to rigorously evaluate the trade-off between accuracy and computational efficiency when comparing numerical and neural solvers.

As shown in Figure~\ref{fig:runtime}, we observe an approximate $2\times$ speedup for the Heat equation and a $4\times$ speedup for the Advection equation. While this speedup is more modest than the orders-of-magnitude improvements reported in some prior literature, under such strict evaluation protocols, many existing neural solvers actually perform slower than numerical baselines. As noted in \citet{McGreivy_2024}, even the most competitive models achieve only a marginal $7\%$ speedup on datasets with fixed PDE parameters and boundary conditions. Consequently, our model demonstrates a highly favorable accuracy-efficiency trade-off.

\textbf{Performance Stability}
We report the standard deviation of the error on the Advection equation in Figure~\ref{fig:std} in Appendix~\ref{appendix:std}. Our model exhibits the lowest standard deviation, demonstrating that our training curriculum and robust optimization loss ensure reliable performance suitable for real-world applications. Interestingly, \textit{ViT-2} also shows low standard deviation. This is likely because the limited input context forces the model to predict averaged dynamics.

\subsection{Additional Experiments}

\textbf{Ablation Study and Model Analysis: } Appendix~\ref{appendix:ablation}

\textbf{Additional Experiment on 3D dataset: } Appendix~\ref{appendix:3d}

\textbf{Additional Experiment on Out-of-Distribution Generalization: } Appendix~\ref{appendix:ood}

\textbf{Additional Comparison with Physics Informed Neural Operator (PINO): } Appendix~\ref{appendix:pino}

\textbf{Additional Experiment on Complex PDE: } Appendix~\ref{appendix:complex}

\section{Conclusion}
We address the critical limitations of existing foundational models by proposing a theoretically grounded \textit{Generalized Neural Operator}. BWe achieve superior generalization while maintaining the computational efficiency of numerical solvers.

\newpage

\bibliography{neurips_2026}
\bibliographystyle{neurips_2026}

%%%%%%%%%%%%%%%%%%%%%%%%%%%%%%%%%%%%%%%%%%%%%%%%%%%%%%%%%%%%

\newpage

\appendix

\section{Broader Impact}
This work significantly advances the field of scientific machine learning by establishing a theoretically rigorous framework for neural PDE solvers. By demonstrating that efficient, parameter and boundary-aware architectures can outperform numerical solvers, we promote a more sustainable computational paradigm that reduces the energy consumption and environmental footprint associated with training large-scale networks. Furthermore, our emphasis on well-posedness enhances the reliability of AI-driven simulations in critical engineering applications, fostering greater trust in data-driven methods.

We do not anticipate any negative social or ethical consequences resulting from this work.

\section{Limitation and Future Work}
A primary limitation of this study lies in the scarcity of established benchmark datasets that incorporate diverse boundary conditions. Existing datasets in the domain predominantly focus on fixed or simplistic boundary scenarios, which restricts the comprehensive evaluation of generalized neural operators. While we have mitigated this issue by generating a diverse set of boundary variations to rigorously validate our approach, the lack of a standardized community benchmark remains a challenge. We hope that future work will address this gap by developing high-quality, unified datasets that systematically capture the complexity of varying boundary interactions in physical systems. Establishing such comprehensive resources will require a sustained, long-term community effort and cannot be easily resolved within the scope of our current work.

\section{Datasets}
\label{appendix:datasets}
\textbf{Heat Equation}
The heat equation models the diffusion of temperature in a domain 
$\Omega \subset \mathbb{R}^d$:
\[
\begin{cases}
\dfrac{\partial u(\bm{x},t)}{\partial t} = \alpha \nabla^2 u(\bm{x},t), & (\bm{x},t) \in \Omega \times (0,T], \\[6pt]
\mathcal{B}[u] = 0 & \\[6pt]
u(\bm{x},0) = u_0(\bm{x}), & \bm{x} \in \Omega,
\end{cases}
\]
where $u(\bm{x},t)$ is the temperature field, $\bm{x} \in \mathbb{R}^2$ denotes the spatial position, 
and $t$ is time. The coefficient $\alpha > 0$ is the thermal diffusivity, defined as $\alpha = \frac{k}{\rho c_p},$ where $k$ is the thermal conductivity, $\rho$ is the density, and $c_p$ is the specific heat capacity. Here, the PDE parameter is the thermal diffusivity $\alpha$.

To evaluate generalization across PDE parameters, we examined thermal diffusivity $\alpha \in [0.01, 1.0]$. For the training data, we defined a linearly spaced parameter grid $A_{\text{train}} = \{\alpha_i\}_{i=1}^{501}$ over this interval. At each $\alpha_i \in A_{\text{train}}$, we generated 10 distinct trajectories, with initial conditions sampled randomly from a Gaussian field under strictly periodic boundary conditions. For validation and testing, we evaluated the model's interpolation capabilities by sampling a denser grid $A_{\text{test}} = \{\tilde{\alpha}_j\}_{j=1}^{1001}$. For each $\tilde{\alpha}_j \in A_{\text{test}}$, we generated 2 trajectories.

Next, we assessed generalization across boundary conditions by fixing the thermal diffusivity and varying the constraints among Dirichlet, Neumann, and periodic types. For the Dirichlet case, we generated 100 setups with boundary values sampled uniformly from $[-10, 10]$. Similarly, for the Neumann condition, we created 100 setups with boundary fluxes sampled from the same range. For both types, we generated 10 trajectories per setup. Additionally, we generated 1,000 trajectories for the periodic boundary condition. The validation and test sets mirrored this structure but generated only 2 trajectories per specific boundary setup.

Finally, we conducted joint pre-training to handle simultaneous variations in parameters and boundary conditions. We utilized the same diffusivity range ($0.01$ to $1.0$) with 501 evenly spaced training values. For each diffusivity value, we created 5 distinct boundary setups for each type (Dirichlet, Neumann, and periodic), where values and fluxes were randomly sampled. We generated 10 trajectories for each unique parameter-boundary combination using Gaussian-sampled initial conditions.

\textbf{Advection Equation}
The 2D advection equation models the transport of a scalar quantity 
$u(x,t)$ (such as temperature, density, or concentration) within a flow field 
$\mathbf{v}(x,t) = (v_1, v_2)^\top$ over a spatial domain $\Omega \subset \mathbb{R}^2$:

\[
\left\{
\begin{aligned}
    &\frac{\partial u}{\partial t} + \bm{v}(\bm{x}) \cdot \nabla u = 0, 
    && (\bm{x}, t) \in \Omega \times (0, T], \\
    &\mathcal{B}[u] = 0, 
    &&  \\
    & u(\bm{x}, 0) = u_0(\bm{x}), 
    && \bm{x} \in \Omega,
\end{aligned}
\right.
\]

where $\bm{x} = (x_1, x_2) \in \mathbb{R}^2$ and $t$ denotes time. The velocity $\bm{v}(\bm{x})$ is the PDE parameter. 

To investigate generalization across PDE parameters, we focused on varying the velocity field $\bm{v}(\bm{x})$. We generated the training dataset by randomly sampling 500 distinct velocity fields from a Gaussian field. For each velocity configuration, we simulated 10 distinct trajectories with initial conditions also drawn from a random Gaussian field. All simulations in this phase utilized periodic boundary conditions. For validation and testing, we expanded the scope by sampling 1,000 new velocity fields. For each of these test cases, we generated 2 trajectories to assess the model's performance on unseen parameters.

Subsequently, we evaluated generalization across boundary conditions by fixing the velocity field $\bm{v}(\bm{x})$ and varying the boundary constraints among Dirichlet, Neumann, and periodic types. For the Dirichlet condition, we created 100 configurations with boundary values sampled uniformly from $[-10, 10]$. Similarly, for the Neumann condition, we established 100 setups with boundary fluxes sampled from the same range. In each of these bounded cases, we generated 10 trajectories. Additionally, we produced 1,000 trajectories for the periodic boundary condition. In all scenarios, initial conditions were sampled from a random Gaussian field. The validation and test datasets followed this structure, generating 2 trajectories for each specific boundary setup.

Finally, we undertook a joint pre-training approach to capture the coupled effects of varying parameters and boundary conditions. We sampled 500 velocity fields $\bm{v}(\bm{x})$ from a random Gaussian field for the training set. For every specific velocity field, we constructed 5 distinct boundary configurations for each type (Dirichlet, Neumann, and periodic), with random sampling applied to the Dirichlet values and Neumann fluxes. We then generated 10 trajectories for each unique combination of parameter and boundary settings. For validation and testing, we sampled 1,000 new velocity fields. Following the training protocol for boundary generation, we produced 2 trajectories per setup to evaluate the model's ability to generalize jointly across diverse physical parameters and boundary constraints.

\textbf{Burgers' Equation}
The 2D viscous Burgers' equation describes the evolution of a velocity field 
$\bm{u} = (u_1, u_2)^\top$ in a domain $\Omega \subset \mathbb{R}^2$:

\[
\left\{
\begin{aligned}
    &\frac{\partial \bm{u}}{\partial t} + (\bm{u} \cdot \nabla) \bm{u} = \nu \nabla^2 \bm{u}, 
    && (\bm{x}, t) \in \Omega \times (0, T], \\
    &\mathcal{B}[\bm{u}] = 0, 
    &&  \\
    & \bm{u}(\bm{x}, 0) = \bm{u}_0(\bm{x}), 
    && \bm{x} \in \Omega,
\end{aligned}
\right.
\]

where $\bm{u}(\bm{x},t) = (u_1(\bm{x},t), u_2(\bm{x},t))^\top$ is the velocity vector field, 
$\bm{x} = (x_1, x_2) \in \mathbb{R}^2$ is the spatial coordinate, 
and $t$ is time. The coefficient $\nu > 0$ is the kinematic viscosity, 
controlling the strength of diffusive effects relative to nonlinear advection. 

% \textbf{Generalization across PDE Parameter} 
We examined the kinematic viscosity $\nu$ across distinct training and testing regimes. For the training data, we defined a linearly spaced parameter grid $V_{\text{train}} = \{\nu_i\}_{i=1}^{501}$ over the interval $[5 \times 10^{-4}, 5 \times 10^{-2}]$. At each $\nu_i \in V_{\text{train}}$, we generated 10 distinct trajectories of length $T=100$ time steps. Initial conditions were sampled randomly from a Gaussian field, and all simulations strictly enforced periodic boundary conditions. For the validation and test datasets, we evaluated a shifted parameter space, defining a denser grid $V_{\text{test}} = \{\tilde{\nu}_j\}_{j=1}^{1001}$ over the interval $[0.01, 1.0]$. For each $\tilde{\nu}_j \in V_{\text{test}}$, we generated 2 trajectories of $100$ time steps using the same initial and boundary condition protocols.

\textbf{Incompressible Navier-Stokes Equation}

The 2D incompressible Navier-Stokes equations govern the motion of a viscous, incompressible fluid within a domain $\Omega \subset \mathbb{R}^2$:

\begin{equation*}
\scalebox{0.85}{% 
$
\left\{
\begin{aligned}
    &\frac{\partial \bm{u}}{\partial t} + (\bm{u} \cdot \nabla)\bm{u} = -\nabla p + \nu \nabla^2 \bm{u} + \bm{f}, 
    && (\bm{x}, t) \in \Omega \times (0, T], \\
    &\nabla \cdot \bm{u} = 0, 
    && (\bm{x}, t) \in \Omega \times (0, T], \\
    &\mathcal{B}[\bm{u}, p] = 0, 
    &&  \\
    &\bm{u}(\bm{x}, 0) = \bm{u}_0(\bm{x}), 
    && \bm{x} \in \Omega,
\end{aligned}
\right.
$
}
\end{equation*}

where $\bm{u}(\bm{x}, t) = (u_1(\bm{x}, t), u_2(\bm{x}, t))^\top$ is the velocity field, 
$p(\bm{x}, t)$ is the pressure field, $\nu=\frac{\rho u L}{Re} > 0$ is the kinematic viscosity, and 
$\bm{f}(\bm{x}, t)$ represents an external body force such as gravity. $\rho$ is the fluid density. $u$ is the velocity of the fluid field. $L$ is the characteristic length. $Re$ is the Reynold number.
The first equation expresses the conservation of momentum, while the second enforces 
the incompressibility condition $\nabla \cdot \bm{u} = 0$, ensuring volume preservation of the flow.

% \textbf{Generalization across PDE Parameter}
We examine the Reynolds number across a range of values from 500 to 1000, inclusive. To create our training data, we select 101 evenly spaced values within this range. For each specific kinematic viscosity, we generate 10 distinct trajectories, each with a length of 100 time steps. The initial conditions for these trajectories are sampled randomly from a Gaussian field. For the validation and test datasets, we again look at kinematic viscosity ranging from 500 to 1000. However, we sample this space more densely, selecting 501 evenly spaced values. For each of these values, we generate 2 trajectories, each with a length of 100 time steps, using initial conditions sampled from a random Gaussian field. All simulations use periodic boundary conditions.

\section{Justifications on Benchmark Dataset Selection}
\label{appendix:justify_dataset}
Our benchmark datasets include a variety of PDEs. For PDE parameters, we consider wide variations. For boundary conditions, \textcolor{red}{we consider both local and nonlocal boundary conditions, including Dirichlet, Neumann, and periodic}, which are commonly studied boundary conditions in existing works~\citep{wang2024beno, saad2023guiding, wang2025fdbenchmodularfairbenchmark, takamoto2024pdebenchextensivebenchmarkscientific, ohana2024well}.

Existing benchmarks and datasets~\citep{wang2025fdbenchmodularfairbenchmark, takamoto2024pdebenchextensivebenchmarkscientific, ohana2024well} generally lack diverse boundary variations. Thus, we believe that our datasets constitute a valuable contribution and represent a significant step toward building generalized solvers. We suggest that future work focus on providing more comprehensive datasets regarding boundary variations.

\section{Model Implementation Details}
\label{appendix:implementation}
\textbf{\textit{ViT-{k}}} We adopt the architecture design from \citet{dosovitskiy2021imageworth16x16words}, utilizing a latent size of 256 and a patch size of 16. The model comprises 8 attention blocks, each configured with 8 attention heads and an MLP ratio of 4.0. We concatenate the $k$ initial conditions through the channel dimension.

\textbf{\textit{Concat}} The PDE parameters are spatially broadcast and concatenated with the initial condition to form the input for a ViT-based architecture~\citep{dosovitskiy2021imageworth16x16words}. We employ a patch size of 16 and a latent dimension of 256. The model consists of 8 attention blocks, each featuring 8 attention heads and an MLP ratio of 4.0.

\textbf{\textit{CAPE+Unet}} We combine the CAPE~\citep{takamoto2023learningneuralpdesolvers} with Unet~\citep{ronneberger2015unetconvolutionalnetworksbiomedical}. The CAPE module is configured with a widening factor of 64, a kernel size of 5, and a normalized dimension of 128. Additionally, $\text{if\_11cnv}$ is enabled, and the number of parameter embedding channels is set to 3. For the U-Net architecture, we utilize 32 initial features. The CAPE module accepts a single initial condition frame along with the PDE parameter, expanding the channel dimension to generate latent initial conditions. This representation is then fed into the U-Net to predict the subsequent frame.

\textbf{\textit{CAPE+FNO}} We combine the CAPE~\citep{takamoto2023learningneuralpdesolvers} with FNO~\citep{li2021fourier}. The CAPE module is configured with a widening factor of 64, a kernel size of 5, and a normalized dimension of 128. Additionally, $\text{if\_11cnv}$ is enabled, and the number of parameter embedding channels is set to 3. For FNO, we use modes 12 and a width of 20. The CAPE module accepts a single initial condition frame along with the PDE parameter, expanding the channel dimension to generate latent initial conditions. This representation is then fed into the FNO to predict the subsequent frame.

\textbf{\textit{CAPE+Attn}} We combine the CAPE~\citep{takamoto2023learningneuralpdesolvers} with ViT~\citep{dosovitskiy2021imageworth16x16words}. The CAPE module is configured with a widening factor of 64, a kernel size of 5, and a normalized dimension of 128. Additionally, $\text{if\_11cnv}$ is enabled, and the number of parameter embedding channels is set to 3. For ViT, we use a latent dimension of 256, a patch size of 16, and 8 attention blocks featuring 8 heads and an MLP ratio of 4.0. The CAPE module accepts a single initial condition frame along with the PDE parameter, expanding the channel dimension to generate latent initial conditions. This representation is then fed into the FNO to predict the subsequent frame.

\textbf{\textit{Unisolver}} We adopt the architecture design from \citet{zhou2025unisolver}, utilizing a latent size of 256 and a patch size of 16. The model comprises 8 attention blocks, each configured with 8 attention heads and an MLP ratio of 4.0.

\textbf{\textit{MoE-POT}} We adopt the architecture proposed by \citet{wang2025mixtureofexpertsoperatortransformerlargescale}. The model is configured with a latent dimension of 256, a patch size of 16, and 8 attention blocks featuring 8 heads and an MLP ratio of 4.0. Additionally, each block incorporates 4 experts to choose from. We use the top 2 experts and a shared common expert.

\textbf{\textit{Ours}} Our model is configured with a latent dimension of 256, a patch size of 16, and 8 attention blocks featuring 8 heads and an MLP ratio of 4.0. Each attention block contains 4 kernels (we use 3 kernels for the incompressible Navier Stokes). The gating function is a lightweight MLP with hidden size 32.

\section{Why Mixture of Kernel?}
\label{appendix:why_mixture}
\textbf{Error Trends vs. PDE Parameters} We plot the three strong baselines, \textit{CAPE+FNO}, \textit{CAPE+Attn}, and \textit{UniSolver}, showing normalized MSE (nMSE) as a function of the PDE parameter on the Incompressible Navier–Stokes dataset. The results reveal that nMSE varies markedly across parameters. All models degrade at low Reynolds numbers, while CAPE+FNO and CAPE+Attn exhibit an approximately exponential decay in error as the Reynolds number increases, reaching extremely small errors in the high-Re regime. The variability is most pronounced for CAPE+FNO, where nMSE is roughly 0.4 at low Re but becomes nearly negligible at high Re. These patterns suggest that the single kernel employed in the CAPE framework effectively captures PDE dynamics only within certain parameter regimes and can fail substantially outside those regimes. In contrast to the CAPE variants, UniSolver exhibits a noticeably smoother nMSE curve across PDE parameters, with no large fluctuations. The trend is still roughly parabolic, however, indicating parameter-dependent variation. This supports the same conclusion as before. A single kernel can capture PDE dynamics well only within certain parameter regimes. The results are reported in Figure~\ref{fig:param_vs_nmse}.

\begin{figure}[htbp]
  \centering
  \begin{subfigure}{0.33\textwidth}
    \includegraphics[width=\linewidth]{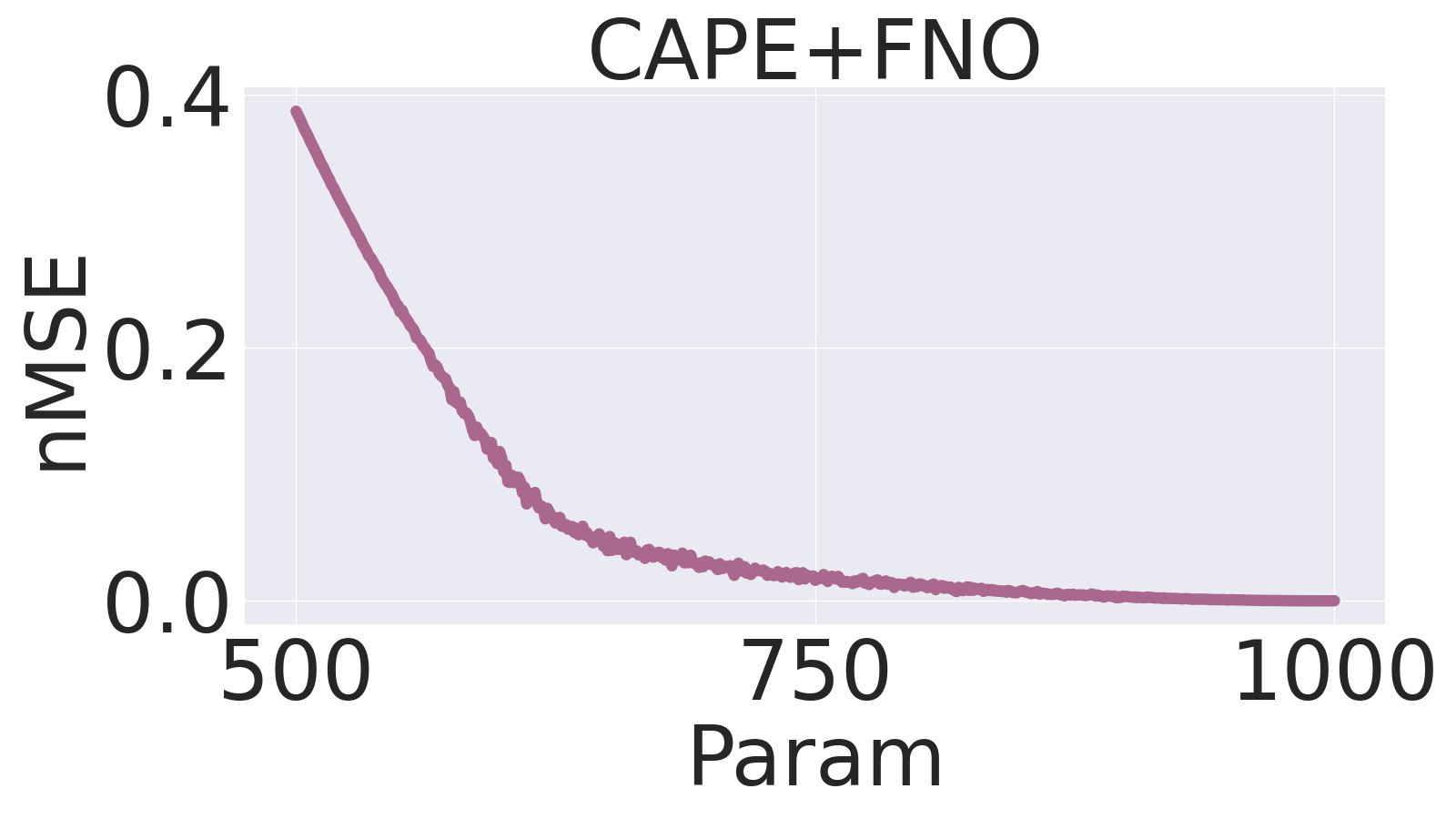}
  \end{subfigure}\hfill
  \begin{subfigure}{0.33\textwidth}
    \includegraphics[width=\linewidth]{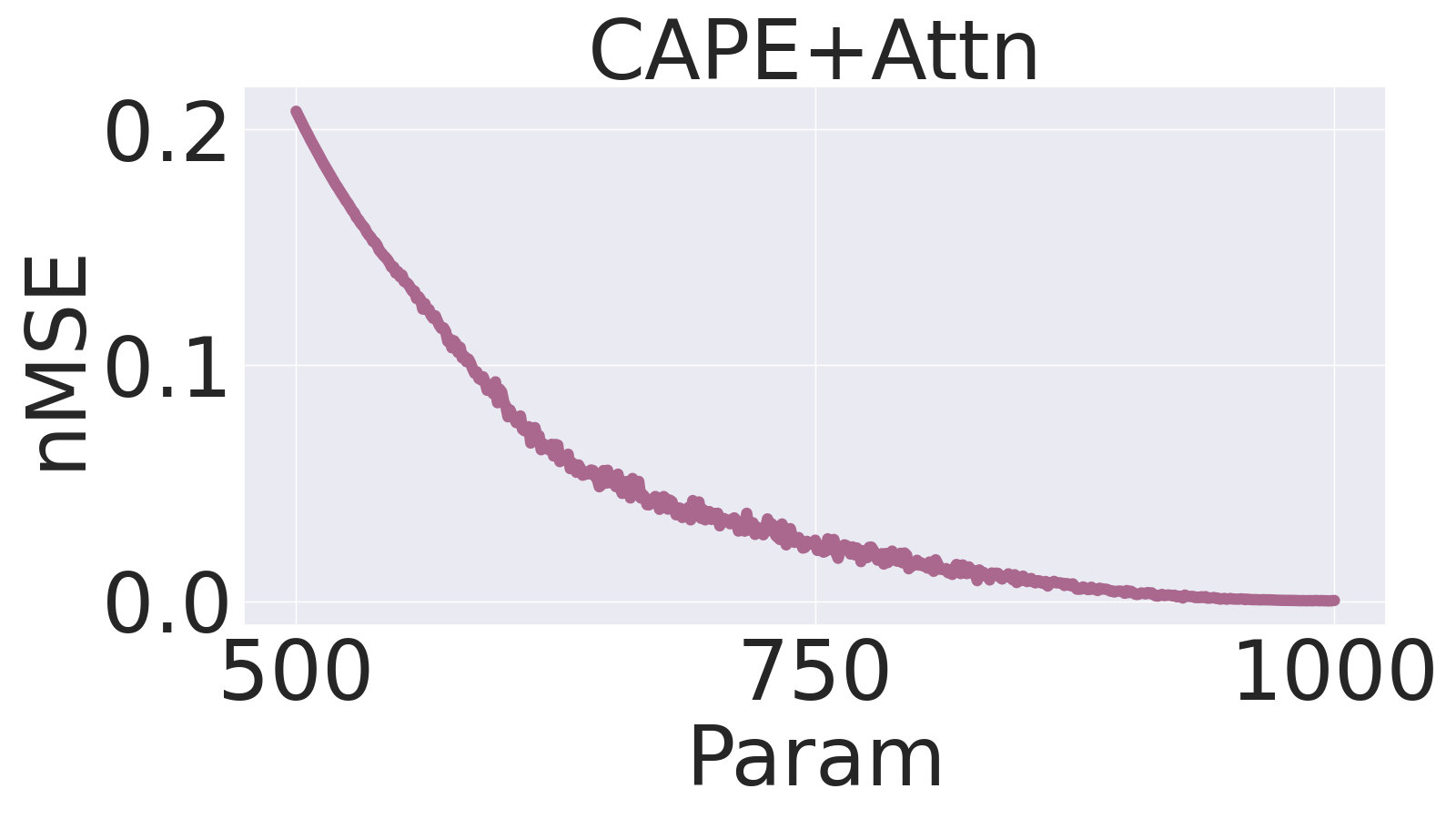}
  \end{subfigure}\hfill
  \begin{subfigure}{0.33\textwidth}
    \includegraphics[width=\linewidth]{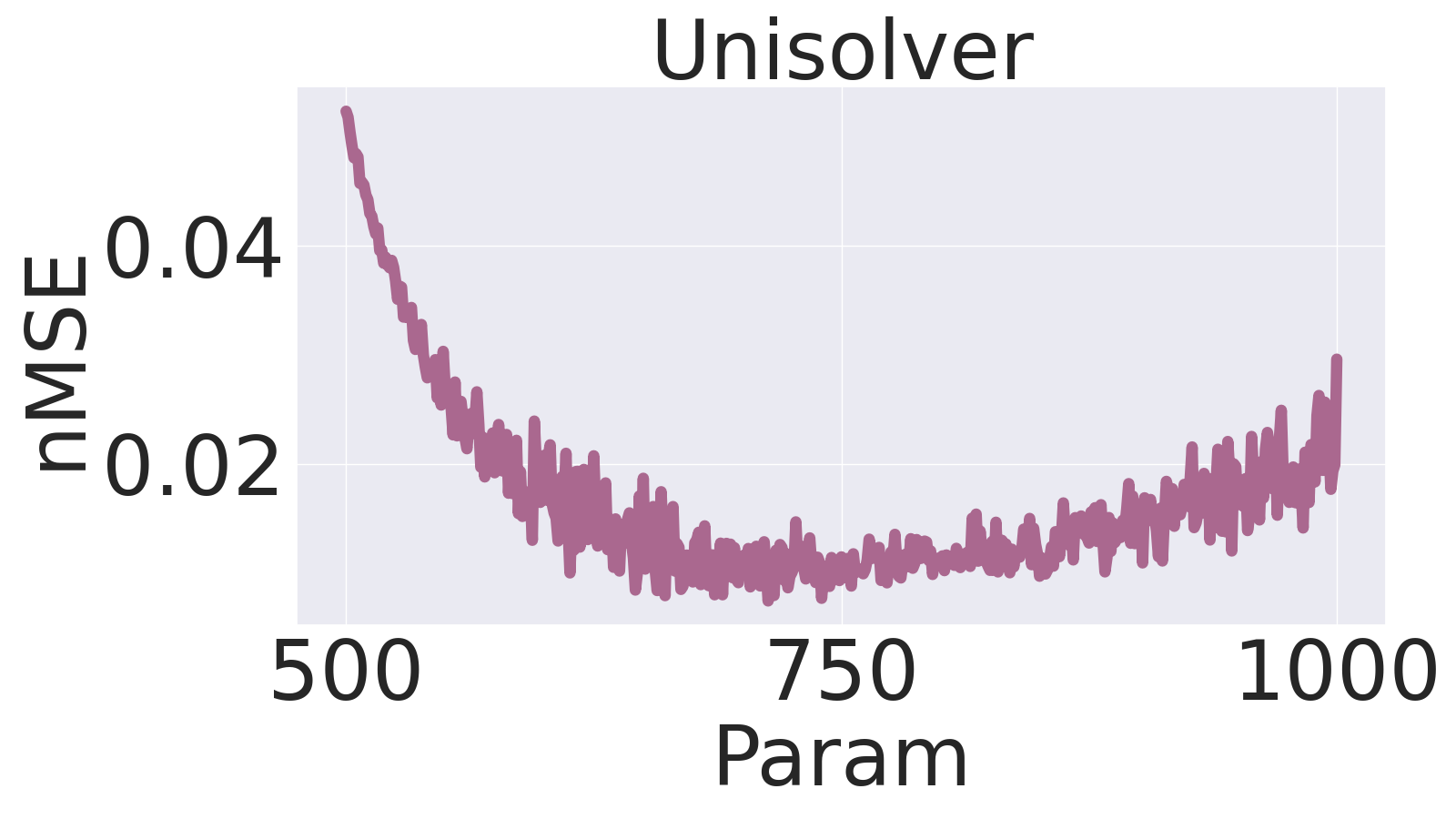}
  \end{subfigure}
  \caption{Error Trends vs. PDE Parameters on Incompressible Navier–Stokes dataset for \textit{CAPE+FNO}, \textit{CAPE+Attn}, and \textit{UniSolver}.}
  \label{fig:param_vs_nmse}
\end{figure}

\section{Additional Baselines for PDE Parameters Generalization}
\label{appendix:param_baselines}

\textbf{FiLM Encoder: } We apply FiLM~\citep{perez2017filmvisualreasoninggeneral} to the encoding stage of the model. $Z = \text{PatchEmbed}(x)$ and $\tilde{Z} = \gamma(\bm{\theta}) \odot Z + \beta(\bm{\theta})$, where $\gamma(\bm{\theta}) = W_{\gamma} \bm{\theta} + b_{\gamma}$ and $\beta(\bm{\theta}) = W_{\beta} \bm{\theta} + b_{\beta}$. Here $W_{\gamma}$, $b_{\gamma}$, $W_{\beta}$, and $b_{\beta}$ are learnable model parameters. $\bm{\theta}$ is the PDE parameter and $Z$ represent the latent embedding of initial condition.

\textbf{FiLM Attn: } We apply FiLM~\citep{perez2017filmvisualreasoninggeneral} before every multi-head attention layer. $\tilde{Z} = \gamma(\bm{\theta}) \odot Z + \beta(\bm{\theta})$ and $\tilde{Z} = \mathrm{MHA}\!\left(\mathrm{Query} = \tilde{Z},\ \mathrm{Key} = \tilde{Z},\ \mathrm{Value} = \tilde{Z} \right)$.

\textbf{Pos Encoder: } We apply parameter-guided positional encoding to the encoding stage of the model. $\tilde{Z} = Z + \text{PosEmbed}(\bm{\theta})$.

\textbf{Pos Attn: } We apply parameter-guided positional encoding before every multi-head attention layer. $\tilde{Z} = Z + \text{PosEmbed}(\bm{\theta})$ and $\tilde{Z} = \mathrm{MHA}\!\left(\mathrm{Query} = \tilde{Z},\ \mathrm{Key} = \tilde{Z},\ \mathrm{Value} = \tilde{Z} \right)$

\textbf{Proj Encoder: } We concatenate the input with the parameter and feed the result into an MLP in the encoding stage of the model. $\tilde{Z} = W [Z \mid \mid \bm{\theta}_{\text{expand}}] + b$. Here, $W$ and $b$ are learnable model parameters.

\textbf{Proj Attn: } We concatenate the input with the parameter and feed the result into an MLP before every multi-head attention layer. $\tilde{Z} = W [Z \mid \mid \bm{\theta}_{\text{expand}}] + b$ and $\tilde{Z} = \mathrm{MHA}\!\left(\mathrm{Query} = \tilde{Z},\ \mathrm{Key} = \tilde{Z},\ \mathrm{Value} = \tilde{Z} \right)$.

\textbf{Token Encoder: } We concatenate a parameter token at the encoding stage, and this token is not modified throughout the multi-head attention layers.

\textbf{Token Attn: } We concatenate a new parameter token before every multi-head attention layer. This token is only used during multi-head attention and is removed after this operation.

\begin{table}[h]
\centering
\scriptsize{%
% First argument (!) calculates width automatically
% Second argument (3.5cm) sets the fixed height
\resizebox{!}{2.4cm}{%
    \setlength\tabcolsep{6pt}%
    \renewcommand\arraystretch{1.1}%
\begin{tabular}{l  c}
\hline \hline
\rowcolor{CadetBlue!20}
Method & Inc. NS  \\
\hline

FiLM Encoder  & $1.85 \times 10^{-1}$ \\
FiLM Attn & $4.31 \times 10^{-2}$ \\
Pos Encoder & $2.75 \times 10^{-2}$ \\
Pos Attn & $1.22 \times 10^{-1}$ \\
Proj Encoder & Diverge \\
Proj Attn & $1.07 \times 10^{-2}$ \\
Token Encoder & $2.02 \times 10^{-1}$ \\
Token Attn & $3.69 \times 10^{-2}$ \\

\hline

\rowcolor[HTML]{FFFFE0}
\textbf{Ours} & 
\bm{$5.23 \times 10^{-5}$}  \\

\hline

\end{tabular}}}
\caption{Comparison of our proposed method against several additional baselines on conditional generation.}
\label{tab:additional_baselines}
\end{table}

\section{Additional Results on Model Stability}
\label{appendix:std}

We present the results in Figure~\ref{fig:std}.

\begin{figure}[htbp]
    \centering
    \includegraphics[width=0.49\columnwidth]{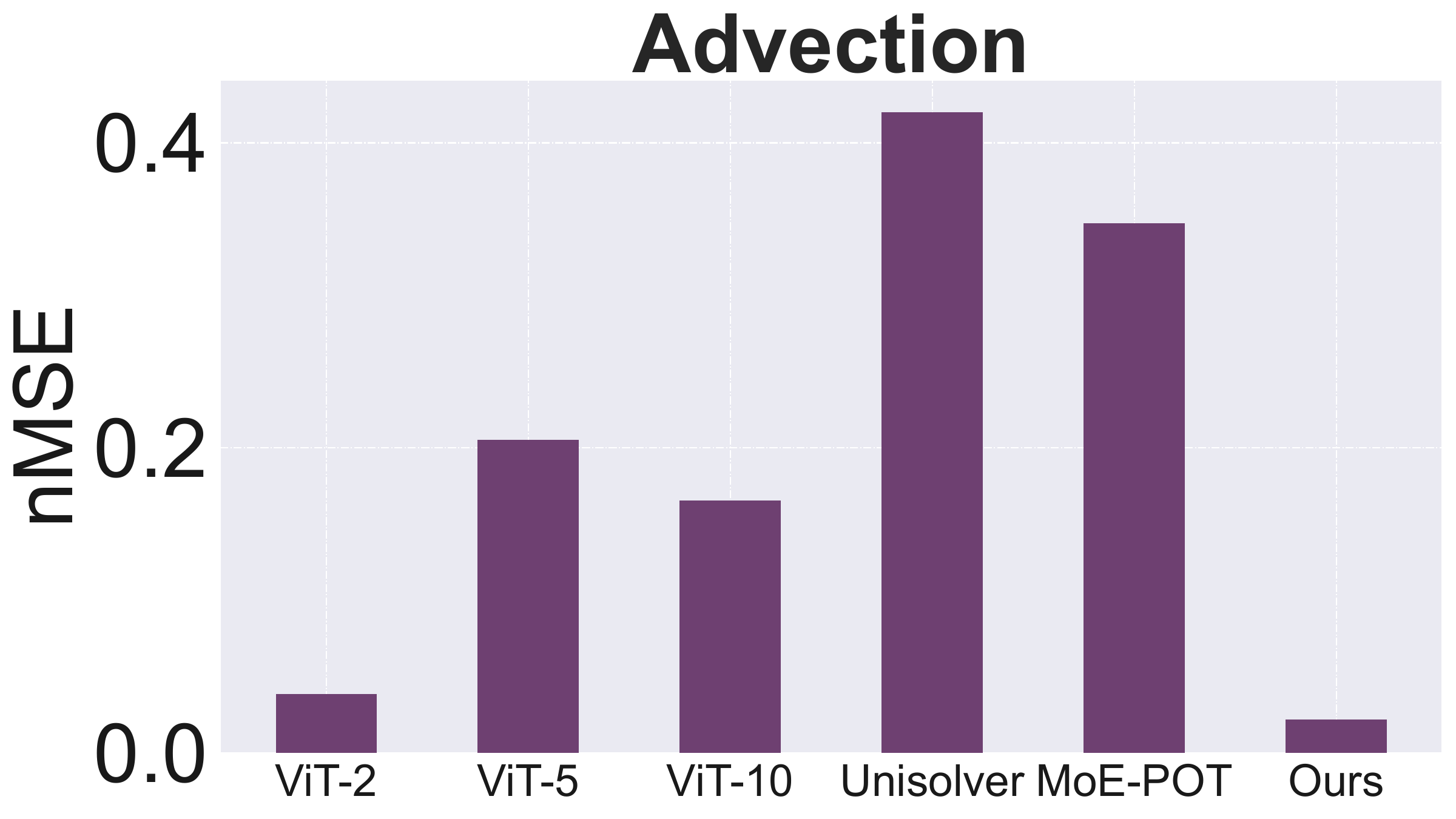}
    \caption{Standard deviation of model performance on the Advection equation in generalization across both PDE parameters and boundary conditions.}
    \label{fig:std}
\end{figure}

\section{Training Curriculum}
We present the training curriculum in Algorithm~\ref{algo:curriculum_dro}.

\begin{algorithm}
\caption{Training Curriculum with Group DRO}
\label{algo:curriculum_dro}
\begin{algorithmic}[1]
\Require Dataset $\mathcal{D}$ partitioned by boundary types and parameter groups $\bm{\Theta}'$; Total epochs $E$; Phase 1 limit $P$.

\For{$e \gets 1$ \textbf{to} $E$}
    \For{each boundary type}
        \State $\mathcal{D}' \gets \{( \bm{u}_0, \bm{u}, \bm{\theta}, \mathcal{B}) \in \mathcal{D} \mid \text{boundary type} \}$

        \If{$e \leq P$}
             \State \Comment{\textbf{Phase 1: Warm-up (Standard MSE)}}
             \State $\mathcal{L}_{\phi} \gets \mathbb{E}_{(\bm{u}_0, \bm{u}, \bm{\theta}, \mathcal{B}) \sim \mathcal{D}'} \big[ \ell(\mathcal{M}_{\phi}(\bm{u}_0, \bm{\theta}, \mathcal{B}), \bm{u}) \big]$
        \Else
            \State \Comment{\textbf{Phase 2: Robust Training (Group DRO)}}
            \State $L_{\bm{\theta}} \gets \mathbb{E}_{(\bm{u}_0, \bm{u}, \mathcal{B}) \sim \mathcal{D}'_{\bm{\theta}}} \big[\ell(\mathcal{M}_{\phi}(\bm{u}_0, \bm{\theta}, \mathcal{B}), \bm{u}) \big] \ \forall \bm{\theta}$
            \State $\mathcal{L}_{\phi} \gets \dfrac{1}{\tau} \log \left( \dfrac{1}{|\bm{\Theta}'|} \sum_{\bm{\theta} \in \bm{\Theta}'} \exp \left( \tau \, L_{\bm{\theta}} \right) \right)$
        \EndIf
        
        \State Update $\phi \gets \phi - \eta \nabla_{\phi} \mathcal{L}_{\phi}$
    \EndFor
\EndFor

\end{algorithmic}
\end{algorithm}

\section{Ablation Study and Model Analysis}
\label{appendix:ablation}

\subsection{Full Ablation Study}
\label{appendix:ablation}
Table~\ref{tab:ablation} presents the results of our ablation study. The top row, "Full Model," incorporates all proposed components: the parameter-gated mixture of kernels, the generalized boundary transfer operator, and the robust optimization objective. Removing any of these modules consistently hurts performance, validating their individual contributions. While the model retains PDE parameters even without the mixture of kernels, removing the generalized boundary transfer operator completely deprives the network of boundary information. This specific removal causes the most severe drop in performance, strongly supporting our argument that explicitly incorporating physical inputs is highly beneficial.

\begin{table}[h]
\centering
\scriptsize{
\resizebox{!}{1.1cm}{%
    \setlength\tabcolsep{6pt}%
    \renewcommand\arraystretch{1.1}%
\begin{tabular}{l  c}
\hline \hline
\rowcolor{CadetBlue!20}
Model & Advection  \\
\hline
\rowcolor[HTML]{FFFFE0} Full Model & \bm{$7.97 \times 10^{-2}$} \\
\thickhline
- Mixture of Kernels & $2.15 \times 10^{-1}$ \\
- Generalized Boundary Transfer Operator & $3.36 \times 10^{-1}$ \\
- Robust Optimization Objective & $9.22 \times 10^{-2}$ \\
\hline
\end{tabular}}}%
\caption{Ablation study on the Advection equation with simultaneous variations in PDE parameters and boundary conditions.}
\label{tab:ablation}
\end{table}

\subsection{Analysis on Number of Kernels}
\label{appendix:num_kernels}
We analyze the impact of kernel count on model performance, with results detailed in Table~\ref{tab:num_kernels}. Our findings confirm that increasing the number of kernels improves accuracy, validating the intuition that specialized kernels are necessary to handle distinct physical regimes arising from varied PDE parameters. Notably, the most significant performance gain occurs when increasing the number of kernels from 1 to 2. We hypothesize that because the parameter variations in the incompressible Navier-Stokes datasets are moderate, two kernels per block are sufficient to capture the diverse physical conditions. It is worth noting that since this mixture of kernels is present in every attention block, the increase in capacity scales with the depth of the model.

\begin{table}[h]
\centering
\scriptsize{
\resizebox{!}{1.5cm}{%
    \setlength\tabcolsep{6pt}%
    \renewcommand\arraystretch{1.1}%
\begin{tabular}{l  c}
\hline \hline
\rowcolor{CadetBlue!20}
\# Kernels & Inc. NS  \\
\hline

1  & $3.69 \times 10^{-2}$ \\
2 & $5.86 \times 10^{-5}$ \\
3 & $5.23 \times 10^{-5}$ \\
4 & $5.02 \times 10^{-5}$ \\
5 & $4.56 \times 10^{-5}$ \\

\hline

\end{tabular}}}%
\caption{Comparison of model performance with various kernel numbers.}
\label{tab:num_kernels}
\end{table}

\subsection{Analysis on Robust Optimization Objective Hyperparameters}
\label{appendix:robust_hyerparam}
We investigate the sensitivity of Algorithm~\ref{algo:curriculum_dro} to the switching epoch $P$ (the point at which the training objective transitions from standard MSE to robust optimization). Results are detailed in Tables~\ref{tab:p}. We find that while the model requires an initial MSE warm-up to prevent divergence, switching to the robust objective too late (e.g., epoch 90) degrades performance by limiting the time available for robust optimization. Consequently, we select $P=70$ to balance training stability with sufficient optimization time.

\begin{table}[h]
\centering
\scriptsize{
\resizebox{!}{1.7cm}{%
    \setlength\tabcolsep{6pt}%
    \renewcommand\arraystretch{1.1}%
\begin{tabular}{l  c}
\hline \hline
\rowcolor{CadetBlue!20}
$P$ & Inc. NS  \\
\hline
0  & Diverge \\
10  & $5.88 \times 10^{-5}$ \\
30 & $5.87 \times 10^{-5}$ \\
50 & $6.04 \times 10^{-5}$ \\
70 & $5.23 \times 10^{-5}$ \\
90 & $6.04 \times 10^{-5}$ \\

\hline

\end{tabular}}}%
\caption{Comparison of model performance on hyperparameter $P$ on the incompressible Navier-Stokes dataset.}
\label{tab:p}
\end{table}

\section{Additional Experiment on 3D dataset}
\label{appendix:3d}
Solving 3D PDEs with neural solvers is particularly difficult because the transition from 2D to 3D triggers a cubic explosion in data complexity. Beyond managing raw scale, the model must also capture intricate, non-local spatial interactions. To validate our model's performance in the 3D domain, we generated a 3D advection equation dataset on a $64^3$ grid, incorporating variations in PDE parameters and boundary conditions. As reported in Table~\ref{tab:3d}, purely data-driven methods, such as the ViT variants and MoE-POT, struggle significantly on this dataset. Specifically, the ViT variants require more initial condition frames to sufficiently infer the underlying physics. In contrast, our model achieves superior performance, outperforming baselines both with and without explicit physical inputs.

\begin{table}[ht]
\centering
\small
\setlength{\tabcolsep}{14pt}
\renewcommand\arraystretch{1.4}

\begin{tabular}{l || c}
\hline \hline
\rowcolor{CadetBlue!20}
Method & Advection \\
\hline

ViT-2 & $1.10 \times 10^{0}$ \\
ViT-5 & $4.22 \times 10^{-1}$ \\
ViT-10 & $4.16 \times 10^{-1}$ \\
Unisolver & $1.67 \times 10^{0}$ \\
MoE-POT & $8.48 \times 10^{-1}$ \\
\hline
\rowcolor[HTML]{FFFFE0}
\textbf{Ours} & \bm{$3.37 \times 10^{-1}$} \\
\hline \hline
\end{tabular}
\caption{Performance comparison on 3D Advection equation with both variations in PDE parameters and boundary conditions. We report the normalized MSE.}
\label{tab:3d}
\end{table}

\section{Additional Experiment on Out-of-Distribution Generalization}
\label{appendix:ood}
We evaluate our model's out-of-distribution (OOD) generalization using the advection dataset, introducing variations in both PDE parameters and boundary conditions. To test parameter shifts, we first calculate the Frobenius norm of the velocity fields in the training set. We then sample testing velocity fields with norms strictly outside the training range, specifically, values either smaller than the training minimum or larger than the training maximum.

For boundary conditions, the Dirichlet values and Neumann fluxes are sampled from [-10, 10] during training. In our OOD evaluation, these values are instead drawn from either the [-15, -10] or the [10, 15] intervals. This ensures that the model is tested on magnitudes entirely unseen during the training phase.

The results, summarized in Table~\ref{tab:ood}, demonstrate that our proposed model achieves optimal performance, highlighting its superior generalization capabilities compared to existing baselines. 

While the velocity fields for the original testing and validation data in Table~\ref{tab:main_results} are also unseen during training (as they are randomly sampled from a Gaussian field), their Frobenius norms may still fall within the range observed during training.

\begin{table}[ht]
\centering
\small
\setlength{\tabcolsep}{14pt}
\renewcommand\arraystretch{1.4}

\begin{tabular}{l || c}
\hline \hline
\rowcolor{CadetBlue!20}
Method & Advection \\
\hline

ViT-2 & $4.48 \times 10^{-1}$ \\
ViT-5 & $2.78 \times 10^{-1}$ \\
ViT-10 & $3.45 \times 10^{-1}$ \\
Unisolver & $5.35 \times 10^{-1}$ \\
MoE-POT & $4.24 \times 10^{-1}$ \\
\hline
\rowcolor[HTML]{FFFFE0}
\textbf{Ours} & \bm{$8.30 \times 10^{-2}$} \\
\hline \hline
\end{tabular}
\caption{Out-of-distribution comparison on the Advection equation with both variations in PDE parameters and boundary conditions. We report the normalized MSE.}
\label{tab:ood}
\end{table}

\section{Additional Comparison with Physics Informed Neural Operator (PINO)}
\label{appendix:pino}
The Physics-Informed Neural Operator (PINO) \citep{li2023physicsinformedneuraloperatorlearning} combines data-driven neural operators with physics-informed neural networks, leveraging both empirical data and symbolic equations via loss functions during training. However, PINO inherits a key limitation of traditional PINNs. It is tied to specific PDE parameters and boundary conditions. Because it does not explicitly receive physical parameters as inputs during inference, it struggles to generalize to new scenarios without retraining. To evaluate this, we trained PINO on the advection equation, using variations in both PDE parameters and boundary conditions purely to guide the training loss. As shown in Table~\ref{tab:pino}, our proposed model consistently outperforms PINO.

\begin{table}[ht]
\centering
\small
\setlength{\tabcolsep}{14pt}
\renewcommand\arraystretch{1.4}

\begin{tabular}{l || c c}
\hline \hline
\rowcolor{CadetBlue!20}
Method & Heat & Advection \\
\hline
PINO & $2.98 \times 10^{0}$ & $3.53 \times 10^{-1}$ \\ 
\hline
\rowcolor[HTML]{FFFFE0}
\textbf{Ours} & 
\bm{$3.62 \times 10^{-1}$} & 
\bm{$7.97 \times 10^{-2}$} \\
\hline \hline
\end{tabular}

\caption{Performance comparison between PINO and our proposed model. We report the normalized MSE on the Heat and Advection equations with both PDE parameter and boundary condition variations.}
\label{tab:pino}
\end{table}

\section{Additional Experiment on Complex PDE}
\label{appendix:complex}
Current benchmark datasets, such as those by \citet{ohana2024well, takamoto2024pdebenchextensivebenchmarkscientific, toshev2024lagrangebench}, typically lack simultaneous variations in both PDE parameters and boundary conditions for a single PDE type. Consequently, the primary datasets used in this study were self-generated. To further evaluate our model’s robustness, we also employ the 2D Turbulent Radiative Layer dataset from \citet{ohana2024well}. This simulation features a cold, dense gas layer beneath a hot, dilute gas layer, moving at highly subsonic relative velocities. The setup utilizes periodic boundary conditions in the $x$-direction and zero-gradient Neumann conditions in the $y$-direction. The governing PDE parameter, the cooling time, is varied across the set $\{0.03, 0.06, 0.1, 0.18, 0.32, 0.56, 1.00, 1.78, 3.16\}$. We model the density, pressure, and $x,y$ velocity fields. This validates our model on a public, computationally challenging PDE. We note that the boundary conditions remain fixed while only the PDE parameters vary.

We report the MSE in Table~\ref{tab:complex}. Our model achieves a considerably lower MSE compared to the baselines. Notably, purely data-driven ViT models perform poorly in this configuration. We hypothesize that because this PDE involves multiple coupled fields, ViT-based architectures struggle to capture the underlying physics from concatenated data. This highlights a significant limitation of purely data-driven approaches and underscores the superiority of our framework.

\begin{table}[ht]
\centering
\small
\setlength{\tabcolsep}{14pt}
\renewcommand\arraystretch{1.4}

\begin{tabular}{l || c}
\hline \hline
\rowcolor{CadetBlue!20}
Method & Advection \\
\hline

ViT-2 & $543.35$ \\
ViT-5 & $314.46$ \\
ViT-10 & $281.18$ \\
Unisolver & $35.43$ \\
MoE-POT & $24.82$ \\
\hline
\rowcolor[HTML]{FFFFE0}
\textbf{Ours} & \bm{$11.99$} \\
\hline \hline
\end{tabular}
\caption{Performance Comparison on 2D Turbulent Radiative Layer dataset. This is a public dataset containing complex PDE with variations only in PDE parameters. We report the MSE over 10 steps of rollout.}
\label{tab:complex}
\end{table}

\section{Proof of Proposition~\ref{prop:well}}
\label{appendix:proof_well}
We begin by outlining and justifying our assumptions.

\begin{assumption}[Existence]
\label{assump:existence}
    For every initial condition $\bm{u}_0 \in \mathcal{X}_{\text{init}}$, a solution $\bm{u} \in \mathcal{Y}$ to the problem defined in Equation~\ref{eq:problem} \textbf{exists}.
\end{assumption}
\begin{proof}[Justification]
This assumption is implicit in standard operator learning frameworks~\citep{kovachki2023neuraloperator}, which model the solution operator as a continuous map. The formulation of such a map inherently presumes that for every input initial condition, a realizable output exists.
\end{proof}

\begin{assumption}[Uniqueness]
\label{assump:uniqueness}
The solution to the problem defined in Equation~\ref{eq:problem} is locally \textbf{unique}. Specifically, for a fixed initial condition $\bm{u}_0$, there exists exactly one solution trajectory $\bm{u}$. This ensures that $\mathcal{S}_{\bm{\theta}, \mathcal{B}}$ is a single-valued mapping rather than a one-to-many relation.
\end{assumption}
\begin{proof}[Justification]
Uniqueness is a necessary condition for standard, deterministic neural operator architectures~\citep{kovachki2023neuraloperator}, which are designed to approximate single-valued functions. If the PDE admits non-unique solutions, they will fail to converge to a valid physical solution, often learning an unphysical average of the possible modes. Modeling one-to-many mappings requires probabilistic frameworks, such as diffusion-based neural operators~\citep{huang2024diffusionpdegenerativepdesolvingpartial, li2025selfguideddiffusionmodelaccelerating, li2025videopde, baldan2025flowmatchingmeetspdes}.
\end{proof}

\begin{assumption}[Non-Degeneracy of PDE Configuration]
\label{assump:identifiability}
The family of PDE problems defined by the configuration space $\mathcal{C} = \Theta \times \mathbb{B}$ is \textbf{non-degenerate}. We assume that the PDE parameters and boundary conditions exert a non-trivial influence on the system dynamics. If an initial condition $\bm{u}_0$ sufficiently excites the system, then there exist at least two distinct configurations $c_1, c_2 \in \mathcal{C}$ whose corresponding solution trajectories are not identical:
\begin{align}
    \exists c_1, c_2 \in \mathcal{C}, \quad c_1 \neq c_2 \quad \text{such that} \quad \mathcal{S}_{c_1}(\bm{u}_0) \neq \mathcal{S}_{c_2}(\bm{u}_0).
\end{align}
\end{assumption}
\begin{proof}[Justification]
This assumption reflects the principle of model parsimony and physical relevance. It postulates that every component of the parameter vector $\bm{\theta}$ and every constraint in the boundary operator $\mathcal{B}$ exerts a measurable influence on the system dynamics. If a specific parameter or boundary condition could be altered without affecting the solution trajectory $\bm{u}$, that element would be physically non-descriptive and effectively irrelevant to the process. We therefore assume the problem is formulated such that no such redundant setup exists.
\end{proof}

We proceed by contradiction. Suppose the generalized neural operator $\mathcal{M}_{\phi}$ is well-defined as a function dependent only on the initial condition $\bm{u}_0$, and that it correctly represents the family $\mathfrak{S}$.

This hypothesis implies that for any valid PDE configuration $c \in \mathcal{C}$ and any initial condition $\bm{u}_0$, the operator satisfies:
\begin{align}
    \mathcal{M}_{\phi}(\bm{u}_0) = \mathcal{S}_c(\bm{u}_0).
\end{align}

Consider the fixed initial condition $\bm{u}_0$ satisfying Assumption~\ref{assump:identifiability}. By the assumption, there exist distinct configurations $c_1, c_2 \in \mathcal{C}$ such that $\mathcal{S}_{c_1}(\bm{u}_0) \neq \mathcal{S}_{c_2}(\bm{u}_0)$.

However, if we apply our hypothesized operator $\mathcal{M}_{\phi}$ to the initial condition $\bm{u}_0$, it must yield the correct solution for both configurations:
\begin{align}
    \mathcal{M}_{\phi}(\bm{u}_0) &= \mathcal{S}_{c_1}(\bm{u}_0) \\
    \mathcal{M}_{\phi}(\bm{u}_0) &= \mathcal{S}_{c_2}(\bm{u}_0)
\end{align}
    
By definition, an operator is single-valued; it maps the input $\bm{u}_0$ to a unique output $\bm{y}^* \in \mathcal{Y}$. Substituting this into the equations above, we obtain:
    
\begin{align}
    \mathcal{S}_{c_1}(\bm{u}_0) = \bm{y}^* = \mathcal{S}_{c_2}(\bm{u}_0).
\end{align}

This implies $\mathcal{S}_{c_1}(\bm{u}_0) = \mathcal{S}_{c_2}(\bm{u}_0)$, which directly contradicts Assumption~\ref{assump:identifiability}. Thus, $\mathcal{M}_{\phi}$ cannot be solely a function of $\bm{u}_0$.

\section{Hardware Specification}
We implement all models in PyTorch~\citep{paszke2019pytorchimperativestylehighperformance}. All experiments are run on servers/workstations with the following configuration:

\begin{itemize}
    \item 80 CPUs, 503G Mem, 8 x NVIDIA V100 GPUs.
    \item 48 CPUs, 220G Mem, 8 x NVIDIA TITAN XP GPUs.
    \item 96 CPUs, 1.0T Mem, 8 x NVIDIA A100 GPUs.
    \item 64 CPUs, 1.0T Mem, 8 x NVIDIA RTX A6000 GPUs.
    \item 224 CPUs, 1.5T Mem, 8 x NVIDIA L40S GPUs.
    \item 128 CPUs, 480G Mem, 8 × NVIDIA RTX 4090 GPUs.
\end{itemize}

%%%%%%%%%%%%%%%%%%%%%%%%%%%%%%%%%%%%%%%%%%%%%%%%%%%%%%%%%%%%

% \newpage
% \input{checklist.tex}

\end{document}